\newtheorem{theorem}{Theorem}
\newtheorem{lemma}{Lemma}
\newtheorem{proposition}{Proposition}
\newtheorem{assumption}{Assumption}
\DeclareMathOperator*{\argmin}{arg\,min}
\title{From GPUs to RRAMs: Distributed In-Memory Primal–Dual Hybrid Gradient Method for Solving Large-Scale Linear Optimization Problems}
\author{
Huynh Q. N. Vo\thanks{School of Industrial Engineering and Management, Oklahoma State University, Stillwater, OK, USA} \thanks{Energy Systems and Infrastructure Assessment Division, Argonne National Laboratory, Lemont, IL, USA}
\and
M. T. R. Chowdhury\thanks{Department of Electrical and Computer Engineering, Wayne State University, Detroit, MI, USA} \footnotemark[2]
\and
Paritosh Ramanan\footnotemark[1]
\and
Gözde Tutuncuoglu\footnotemark[3]
\and
Junchi Yang\footnotemark[2]
\and
Feng Qiu\footnotemark[2]
\and
Murat Yildirim\thanks{Department of Industrial and Systems Engineering, Wayne State University, Detroit, MI, USA}
}
\date{}
\begin{document}
\maketitle

\begin{abstract}
The exponential growth of computational workloads is surpassing the capabilities of conventional architectures, which are constrained by fundamental limits. In-memory computing (IMC) with RRAM provides a promising alternative by providing analog computations with significant gains in latency and energy use. However, existing algorithms developed for conventional architectures do not translate to IMC, particularly for constrained optimization problems where frequent matrix reprogramming remains cost prohibitive for IMC applications. Here we present a distributed in-memory primal–dual hybrid gradient (PDHG) method, specifically co-designed for arrays of RRAM devices. Our approach minimizes costly write cycles, incorporates robustness against device non-idealities, and leverages a symmetric block-matrix formulation to unify operations across distributed crossbars. We integrate a physics-based simulation framework called MELISO+ to evaluate performance under realistic device conditions. Benchmarking against GPU-accelerated solvers on large-scale linear programs demonstrates that our RRAM-based solver achieves comparable accuracy with up to three orders-of-magnitude reductions in energy consumption and latency. These results demonstrate the first PDHG-based LP solver implemented on RRAMs, showcasing the transformative potential of algorithm–hardware co-design for solving large-scale optimization through distributed in-memory computing.
\end{abstract}

\smallskip\noindent\textbf{Keywords:} In-Memory Computing, Distributed Constrained Optimization, Primal-Dual Hybrid Gradient Method

\section{Introduction}

The exponential growth in computational workloads is imposing significant challenges for sustainable computing \cite{10838774}, while conventional computing architectures are struggling to keep pace due to diminishing returns from Moore’s law \cite{doi:10.1126/science.ade2191, 10454417}. A central obstacle for conventional computing architectures is the von Neumann bottleneck—the physical separation of memory and processing—which imposes substantial energy and latency costs through continual data movement during sequential computational tasks \cite{zou_breaking_2021,6634083}. Emerging in-memory computing (IMC) architectures, such as resistive random-access memory (RRAM) device arrays, resolve this obstacle in two key aspects: (i) by co-locating memory and computation, RRAM eliminates costly data transfers, and (ii) RRAMs enable a single analog step for performing matrix vector multiplication (MVM) operations when configured in a crossbar array architecture. Collectively, these aspects offer not only significant gains in latency and energy use for a range of computing tasks but also introduce a new set of challenges stemming from device non-idealities and intrinsic differences in RRAM operations. Hence, computing methods developed for conventional computing do not translate well to IMC applications, which require the development of new algorithms to ensure both computational accuracy and scalability for fundamental computing tasks. This is particularly true for constrained optimization problems, where frequent write operations—trivial in conventional computing systems—become cost prohibitive for IMC applications. 

Among the various IMC technologies, RRAM devices stand out as an ideal candidate for computing applications like constrained optimization due to their scalability potential, back-end-of-the-line (BEOL) compatibility with existing CMOS technology, and nonvolatile characteristics \cite{doi:10.1021/acs.chemrev.4c00845}. RRAM devices operate through reversible resistance switching \cite{chowdhury2025} in metal-oxide-metal structures \cite{10187322,10477696}, where the conductance state can be precisely controlled by analog input and retained without power, enabling both data storage and analog computation within the same physical device. By satisfying key development criteria—including high density, rapid read/write capabilities, fast access speeds, and low power consumption—RRAM positions itself as a promising candidate to potentially replace traditional storage devices and achieve significant computing power enhancements \cite{10454367}.

RRAM device arrays inherently enable efficient MVMs by leveraging Ohm’s and Kirchhoff’s law \cite{https://doi.org/10.1002/adfm.202310193,doi:10.1021/acsaelm.4c00199}. More specifically, RRAMs perform MVM by applying an input voltage to the rows of the crossbar that triggers an analog multiplication with the conductance states that are located at the intersection points of the array, which are then summed over the arrays. In the context of an MVM operation, the conductance states of the array are analogous to matrix entries, the vector entries are given by the input voltage, and the output current gives the resulting vector. This way of conducting MVM computation is in sheer contrast to conventional architectures that rely on sequential multiply–accumulate operations between memory and processing units. These factors collectively provide significant improvements for RRAMs in scientific computing compared to their digital equivalents, reported to reduce energy use 160 times and 128 times \cite{doi:10.1126/sciadv.adr6391}. Since MVMs are the fundamental operations for the majority of linear operations and optimization algorithms \cite{10.1145/3615679, zhao_block-wise_2024}, large-scale optimization problems are often accelerated on GPUs due to their massive parallelism, enabling fast matrix–vector operations. However, the state-of-the-art GPUs still suffer from the von Neumann bottleneck. On the contrary, RRAM can enable energy-efficient low-latency computing architectures and offers a transformative path toward sustainable computing for optimization tasks \cite{le_gallo_mixed-precision_2018}. Capitalizing on this promise, this paper develops a novel framework specifically designed for RRAM devices to solve distributed constrained optimization problems.

\subsection{Related Work}
In this section, we revise two streams of research that, until now, have largely been developed with limited interaction. The first stream focuses on emerging optimization methods in conventional computing that motivate some of our method development. The second stream focuses on computing in RRAM-based devices that showcase opportunities and challenges unique to in-memory architectures. Together, these research streams form the foundation for our proposed method.

\noindent\textit{First-order methods for linear programs:} Large-scale constrained linear programs (LP) push the limits of classical solvers based on simplex and interior-point methods due to heavy sparse matrix factorizations that are performed at each iteration~\cite{lu2024cupdlpjlgpuimplementationrestarted}. These factorizations become a bottleneck: they can consume significant memory—since sparse constraint matrices often yield dense LU/Cholesky factors—and are difficult to parallelize.
In fact, for very large instances, memory blow-up from factorization often causes out-of-memory errors even when the raw matrix data fits in memory~\cite{lu2024cupdlpjlgpuimplementationrestarted}.
First-order methods such as primal–dual hybrid gradient (PDHG), initially designed for image processing applications~\cite{chambolle2011first, condat2013primal, he2012convergence, zhu2008efficient}, avoid this pitfall by using only matrix–vector multiplications instead of matrix inversions.
Each PDHG iteration has a low per-iteration cost and is matrix-free, meaning the algorithm only needs to access the constraint matrix via multiply operations.
This leads to multiple advantages in large-scale settings: the solver's memory footprint is essentially just the problem data, and the simple, regular operations can exploit modern parallel hardware much more effectively than the sequential steps of simplex or interior-point solvers. These advantages also enable an efficient use of GPU for solving LP models, which is not practical for more conventional solution methods such as simplex and interior point methods.

Recent research has demonstrated that with proper enhancements, first-order methods can attain high solution accuracy on LPs and even outperform classical solvers on certain problems.
For example, the works by Lu et al. on PDLP~\cite{lu2024cupdlpjlgpuimplementationrestarted} applied PDHG to LPs with additional techniques like diagonal preconditioning, adaptive step sizing, and restarting, enabling it to solve many benchmark LPs faster than the ADMM-based splitting conic solver (SCS)~\cite{o2021operator} and to beat commercial solvers—e.g., Gurobi~\cite{gurobi}—on a very large PageRank LP instance~\cite{NEURIPS2021_a8fbbd3b}.
Interestingly, developers of commercial LP solvers have also begun integrating PDHG: Gurobi reported that an experimental PDHG algorithm was potentially competitive with its highly optimized simplex and interior-point methods on certain large-scale LPs.~\cite{touretzky2025gurobi}.
These studies underscore that a matrix–vector-centric iterative solver can scale to problem sizes beyond the reach of factorization-based solvers, especially when paired with parallel architectures.

\noindent\textit{RRAM-Based Computing for Optimization:} 
The underlying physical process of RRAM-based computing renders a $n \times n$ matrix–vector multiplication effectively constant in latency—i.e., $\mathcal{O}(1)$ in problem size—in an ideal RRAM implementation~\cite{gokmen2016acceleration}. The energy and latency benefits provide significant benefits for a range of optimization tasks. 
These efforts examined the feasibility of analog LP solvers using RRAMs, leveraging their intrinsic MVM capabilities to accelerate linear solvers \cite{9540988, 9597387, 10092813,li2025voltagecontrolledoscillatormemristorbasedanalog, 11099332}. Cai et al. introduced a low-complexity LP solver that exploits RRAM dynamics for real-time optimization, achieving significant reductions in computation latency and power consumption \cite{7905500, CAI201862}. Shang et al. extended this paradigm by proposing an analog recursive computation circuit tailored for LP, integrating RRAMs with operational amplifiers to realize iterative updates in hardware \cite{9094179}. 
Di Marco et al. further explored neural network-inspired formulations, embedding LP constraints into RRAM-based Hopfield networks to solve both linear and quadratic programs with high fidelity \cite{9121721}. Recent work by Hizzani et al. introduced higher-order Hopfield solvers, outperforming traditional digital methods in speed and energy metrics \cite{10558658}. These designs capitalize on the nonlinearity and memory retention of RRAMs to encode complex optimization landscapes. Liu et al. introduced a generalized optimization framework based on the Alternating Direction Method of Multipliers (ADMM), designed to tackle LPs and quadratic programs (QPs) by offloading linear equation solves to RRAM crossbar hardware \cite{7858420}. Building on this foundation, they extended the approach for AI-driven workloads by embedding LP solvers into a larger RRAM-based computing architecture \cite{8288635}.

Existing RRAM-based optimization methods have certain limitations. First, existing methods have largely been confined to small-scale problems. Second, the main focus for these works is on circuit- or algorithm-level effects while neglecting device-level variability and non-idealities, which constitute the central challenge for reliable IMC implementations. Third, these methods tend to be architecturally rigid and lack robustness, preventing their use as generalized linear solver platforms. Crucially, a comparative analysis against benchmarks such as GPU implementations and commercial solvers such as Gurobi is also missing in prior state-of-the-art.

Regardless, these studies underscore a shift toward in-memory analog computing for optimization, where RRAMs serve not merely as passive elements but as active computational primitives. However, using RRAM hardware for precise computation tasks like constrained optimization is a challenge.
Analog results are subject to device non-idealities that introduce cumulative errors to the final results~\cite{Gulafshan2025,10766540}.
Furthermore, RRAM crossbars can only represent non-negative weights (conductances cannot be negative), so encoding an arbitrary real matrix requires additional tricks such as using dual crossbars or adding offset columns/rows to handle negative values.
Researchers have developed circuit-level and algorithmic techniques to mitigate these issues for related computational tasks ~\cite{vo2025harnessing,10281389,10818995,doi:10.1126/science.adi9405,9779296,10536974}. This convergence of circuit design, algorithmic innovation, and device physics \cite{chowdhury2025} has positioned RRAM LP solvers as a compelling alternative to von Neumann architectures, but they require careful management of precision, device non-idealities, and computational scalability issues to realize those benefits in practice.

\subsection{Contributions}
We propose a distributed in-memory PDHG method for solving large-scale optimization problems. To our knowledge, this is the first demonstration of a PDHG-based LP solver that incorporates device-level modeling through MELISO+ and NeuroSim+~\cite{8268337} to capture variability effects. In contrast to existing approaches, our method scales to substantially larger problem instances and provides a versatile solver design applicable for a range of constrained optimization models. Our work highlights the impact of a co-design approach, where algorithmic development and hardware constraints have been addressed jointly. The key contributions are as follows:

\begin{itemize}[leftmargin=*]
\item We formulate a distributed in-memory PDHG method designed from the ground up for RRAM devices. 
To this end, the proposed PDHG method offers the following key advantages:
\begin{itemize}[leftmargin=*]
\item i) Our method significantly reduces the number of required updates to the matrices in order to reduce the write cycles, which are particularly costly in RRAMs. In contrast, the model uses multiple iterations with a fixed matrix and changing vectors, which is much faster for RRAMs compared to conventional computing paradigms.
\item ii) An enhanced PDHG formulation is developed to work reliably in noisy computation tasks in an effort to mitigate the inaccuracies resulting from RRAM device non-idealities. 
\item iii) A distributed computing framework is developed for RRAMs that decompose and solve the problem by considering device characteristics and sizes of an RRAM array.
\item iv) An implementation scheme is developed to encode and solve any class of LPs using RRAM devices. 
\end{itemize}
\item We provide theoretical convergence bounds for PDHG under analog noise in RRAM-based implementations, which guarantee that in-memory PDHG remains convergent on noisy RRAM hardware under mild conditions, and the rate is provably no worse than conventional computing implementation up to a fixed error term. Other theoretical results also support the significant changes made to the PDHG algorithm to improve efficiency by exploiting the problem structure. 
\item We develop a simulation and evaluation platform for RRAM devices to test the performance of the algorithms using extensive physics-based simulation models (i.e., MELISO+ and NeuroSim+~\cite{8268337}), extended onto a distributed computing environment. The proposed platform provides the capability to faithfully evaluate an extensive set of algorithm performance metrics related to latency and solution accuracy.
\end{itemize}

The remainder of the paper is organized as follows. Section 2 reviews the preliminaries of linear programming and the PDHG method. Section 3 develops our in-memory PDHG formulation for RRAM devices. Section 4 introduces the proposed hardware–algorithm co-design framework. Section 5 presents experimental results on accuracy, energy, and latency. Section 6 discusses scalability and robustness, and Section 7 concludes with key findings and future directions. 

\section{Preliminaries}
The section introduces the LP formulation and the corresponding saddle-point problem, which are used within a PDHG solution algorithm. We conclude the section with common methods for algorithmic enhancements of PDHG in practice.

\subsection{Saddle-point formulation}
In our framework, we propose a methodology to solve a general class of LPs that can be denoted as:
\begin{equation}
    \begin{aligned}
    \min_{\mathbf{x}\in\mathbb{R}^n}\quad & \mathbf{c}^\top \mathbf{x} \\
        \text{s.t.}\quad & \mathbf{G}\mathbf{x} \geq \mathbf{h},\quad \mathbf{A}\mathbf{x} = \mathbf{b},\quad \ell \le x_i \le u, \forall x_i \in \mathbf{x}
    \end{aligned}
\label{eq:boxed-lp}
\end{equation}

where $\mathbf{G}\in\mathbb{R}^{m_1\times n}$, $\mathbf{K}\in\mathbb{R}^{m_2\times n}$, $\mathbf{c}\in\mathbb{R}^n$, $\mathbf{h}\in\mathbb{R}^{m_1}$, $\mathbf{b}\in\mathbb{R}^{m_2}$, and $\ell,u\in \mathbb{R}$.

Projecting this model directly onto the feasible polytope in \eqref{eq:boxed-lp} is generally expensive for first‑order methods (FOMs), so we dualize the linear constraints to obtain the following saddle point problem:

\begin{equation}
    \min_{\mathbf{x}\in \mathcal{X}}\;
\max_{\mathbf{y}\in \mathcal{Y}}\; L(\mathbf{x},\mathbf{y})
    := \mathbf{c}^\top \mathbf{x} - \mathbf{y}^\top \mathbf{K} \mathbf{x} + \mathbf{q}^\top \mathbf{y}
    \label{eq:saddle}
\end{equation}
with
\begin{equation*}
    \begin{cases}
        \mathbf{K}^\top= \begin{bmatrix} \mathbf{G}^\top & \mathbf{A}^\top\end{bmatrix}, \quad \mathbf{q}^\top=\begin{bmatrix} \mathbf{h}^\top & \mathbf{b}^\top\end{bmatrix} \\
        \mathcal{X}:=\{\mathbf{x}\in\mathbb{R}^n | \ell\le x_i\le u, \forall x_i \in \mathbf{x}\}, \\
        \mathcal{Y}:=\{\mathbf{y}\in\mathbb{R}^{m_1+m_2}|y_{1:m_1}\ge 0\}.
\end{cases}
\end{equation*}

Consequently, a saddle point of \eqref{eq:saddle} recovers an optimal primal–dual solution to \eqref{eq:boxed-lp}. Going forward, We will use LPs in their standard form upon suitable projection:
\begin{equation}
    \begin{aligned}
        \min_{\mathbf{x}\in\mathbb{R}^n}\quad & \mathbf{c}^\top \mathbf{x} \\
        \text{s.t.}\quad & \mathbf{K}\mathbf{x} = \mathbf{b},\quad x_i \geq 0, \forall x_i \in \mathbf{x}
    \end{aligned}
\label{eq:standard-lp}
\end{equation}

Thus, the projections in \eqref{eq:standard-lp} become element-wise non-negative for $\mathbf{x}$, and unbounded for $\mathbf{y}$.

\subsection{Derivation of PDHG for Linear Problems}\label{subsection:PDHG_derivation}
Given the saddle point problem, \eqref{eq:saddle}, described in the $f$–$g$ saddle template:

\begin{equation*}
    \begin{aligned}
        &\min_{\mathbf{x}} \max_{\mathbf{y}}\ \bigl[f(\mathbf{x}) - \langle \mathbf{K}\mathbf{x},\mathbf{y}\rangle - g(\mathbf{y})\bigr], \\
        &\text{where} \quad
        \begin{cases}
            f(x):= \mathbf{c}^\top \mathbf{x} + \delta_\mathcal{X}(\mathbf{x}) \\
            g(y):=\delta_\mathcal{Y}(\mathbf{y}) - \mathbf{q}^\top \mathbf{y} \\
            \delta_{\mathcal{S}}:= \text{the indicator of set } \mathcal{S}.
        \end{cases}
    \end{aligned}
\end{equation*}

The primal–dual hybrid gradient (PDHG), a.k.a. the Chambolle–Pock method~\cite{chambolle2011first}, is the following proximal splitting with extrapolation parameter $\theta\in[0,1]$, commonly selected to be $\theta=1$:
\begin{align}
  \mathbf{x}^{\{k+1\}}
  &= \argmin_{\mathbf{x}}
     \left\{
       f(\mathbf{x})
       + \tfrac{1}{2\tau}
         \left\|
\mathbf{x} - \left(\mathbf{x}^{\{k\}} - \tau \mathbf{K}^\top \mathbf{y}^{\{k\}}\right)
         \right\|_2^2
     \right\}, \label{eq:pdhg-x-prox} \\
  \bar{\mathbf{x}}^{\{k+1\}}
  &= \mathbf{x}^{\{k+1\}}
     + \theta \big|_{\theta = 1} \left(\mathbf{x}^{\{k+1\}} - \mathbf{x}^{\{k\}}\right)
     \label{eq:pdhg-extrap}, \\
  \mathbf{y}^{\{k+1\}}
  &= \argmin_{\mathbf{y}}
     \left\{
       g(\mathbf{y})
       + \tfrac{1}{2\sigma}
         \left\|
\mathbf{y} - \left(\mathbf{y}^{\{k\}} + \sigma\, \mathbf{K}\,\bar{\mathbf{x}}^{\{k+1\}}\right)
         \right\|_2^2
     \right\}.
 \label{eq:pdhg-y-prox}
\end{align}
with primal and dual step sizes—that is, $\tau,\sigma>0$—coupled by $\tau\sigma\|\mathbf{K}\|_2^2<1$.

Consider two proximal operations in PDHG:

\begin{itemize}
    \item Since $f(\mathbf{x})=\mathbf{c}^\top \mathbf{x} + \delta_\mathcal{X}(\mathbf{x})$, \eqref{eq:pdhg-x-prox} reduces to a shifted projection:
        \begin{equation*}
            \mathbf{x}^{\{k+1\}}= \operatorname{proj}_\mathcal{X}\!\bigl\{\mathbf{x}^{\{k\}} - \tau(\mathbf{c} - \mathbf{K}^\top \mathbf{y}^{\{k\}})\bigr\}.
        \end{equation*}
    \item For the dual step, $g(\mathbf{y}) = \delta_\mathcal{Y}(\mathbf{y})-\mathbf{q}^\top \mathbf{y}$, so \eqref{eq:pdhg-y-prox} is a quadratic objective over $\mathcal{Y}$ with a linear term. Completing the square yields:
        \begin{equation*}
            \mathbf{y}^{\{k+1\}} = \operatorname{proj}_\mathcal{Y}\!\bigl\{\mathbf{y}^{\{k\}} + \sigma\,(\mathbf{q} - \mathbf{K}\mathbf{\bar{x}}^{\{k+1\}})\bigr\}.
        \end{equation*}
\end{itemize}

Given that we selected $\theta=1$ such that $\mathbf{\bar{x}}^{\{k+1\}}=2\mathbf{x}^{\{k+1\}}-\mathbf{x}^{\{k\}}$, the iterative updates in the PDHG method for \eqref{eq:saddle} become:
\begin{equation}
\boxed{
    \begin{aligned}
        \mathbf{x}^{\{k+1\}} &\leftarrow \operatorname{proj}_{\mathcal{X}}\!\bigl\{\mathbf{x}^{\{k\}} - \tau\,(\mathbf{c} - \mathbf{K}^\top \mathbf{y}^{\{k\}})\bigr\},\\[3.0pt]
        \mathbf{y}^{\{k+1\}} &\leftarrow \operatorname{proj}_{\mathcal{Y}}\!\bigl\{\mathbf{y}^{\{k\}} + \sigma\Big[\mathbf{q} - \mathbf{K}(2\mathbf{x}^{\{k+1\}}-\mathbf{x}^{\{k\}})\Big]\bigr\}.
    \end{aligned}
}
\label{eq:pdhg-lp}
\end{equation}

Proposed in~\cite{lu2024cupdlpjlgpuimplementationrestarted,NEURIPS2021_a8fbbd3b}, the primal and dual step sizes can be re-parameterized as:
\begin{equation*}
    \tau=\frac{\eta}{\omega},\qquad \sigma=\eta\,\omega,\qquad \eta>0,\ \omega>0,
\end{equation*}

where the \emph{step‑size} $\eta$ sets the overall scale and the \emph{primal weight} $\omega$ balances primal and dual progress. Consequently, the coupling $\tau\sigma\|\mathbf{K}\|_2^2<1$ becomes $\eta<\|\mathbf{K}\|_2^{-1}$. Most importantly, the cost of \eqref{eq:pdhg-lp} is dominated by the two matrix–vector multiplies $\mathbf{K}^\top \mathbf{y}^{\{k\}}$ and $\mathbf{K}(2\mathbf{x}^{\{k+1\}} - \mathbf{x}^{\{k\}})$, meaning no linear solves are required.
To satisfy the step-size condition, the PDHG pipeline typically involves two steps, where the first focuses on estimating the spectral norm $\mathbf{K}$, and the second step uses this estimation to execute the PDHG algorithm.

\subsection{Algorithmic Enhancements for PDHG}
To improve the performance of PDHG for large-scale LPs, several practical components are introduced in practice:

\begin{itemize}
    \item \textit{Preconditioning}: The efficacy of FOMs depends on the conditioning of the underlying problem, represented by the conditioning number of $\mathbf{K}$. By employing the preconditioning step—via the Ruiz rescaling~\cite{ruiz2001scaling} and/or the method proposed in~\cite{pock2011diagonal}, we improve the geometry of $\mathbf{K}$ and the projections.
    
    \item \textit{Adaptive step sizes and primal‑weight updates}: In practical applications, the theoretical step-size--that is, $1/\|\mathbf{K}\|_2$—is deemed too conservative. Thus, these approaches, proposed in ~\cite{NEURIPS2021_a8fbbd3b,lu2024cupdlpjlgpuimplementationrestarted}, are employed to keep $\tau\sigma\|\mathbf{K}\|_2^2<1$ while balancing primal/dual progress.
    
    \item \textit{Adaptive restart}: Proposed in~\cite{lu2024cupdlpjlgpuimplementationrestarted,applegate2023faster}, the adaptive restart is employed to enhance convergence by regaining fast (often near‑linear) progress on sharp instances.
    
    \item \textit{Certificates and infeasibility detection:} By monitoring the sequences $\mathbf{d}^{\{k\}} = \mathbf{z}^{\{k+1\}} - \mathbf{z}^{\{k\}}$—with $\mathbf{z}^{\{k\}} = \begin{bmatrix} \mathbf{x}^{\{k\}} &\mathbf{y}^{\{k\}} \end{bmatrix}$, or the normalized average $2\mathbf{\bar{z}}_k/(k+1)$—with $\mathbf{\bar{z}}_k = (\mathbf{z}^{\{k\}}-\mathbf{z}^{\{0\}})/2$, the PDHG provides an infeasibility certificate for the problem of interest~\cite{applegate2024infeasibility}.
\end{itemize}

When $\|\mathbf{K}\|_2$ is unknown, a simple two-sided power iteration (PI) using the same $\mathbf{K}$ and $\mathbf{K}^\top$ is employed to estimate $\|\mathbf{K}\|_2$ robustly.
Once the estimation of $\|\mathbf{K}\|_2$ is obtained, the theoretical step sizes—namely, $(\tau,\sigma)$—are derived by setting $\eta \approx 0.95/\|K\|_2$ (for safeguarding) and then from $(\eta,\omega)$.

\section{In-Memory Framework for Distributed Large-Scale Linear Optimization}
In this section, we outline our proposed in-memory constrained optimization framework. Fig.~\ref{fig:overview} provides an overview of the framework.
Our framework has two primary functions. First, it compares the solution of a benchmarking linear problem obtained from a user-specified RRAM-based solver with the corresponding ground truth obtained from commercial solvers such as Gurobi.
The second function is to provide a detailed methodology and corresponding simulation for in-memory implementation of the PDHG method, including results for comparing with other accelerators such as GPUs.
To this end, the process starts with model preparation that includes preprocessing and preconditioning of the linear problem.
Following this, we propose an in-memory constrained optimization implementation that builds on a two-step approach.
The first step implements an iterative algorithm to estimate certain problem's parameters to ensure convergence.
The second step employs the resulting parameters to implement the PDHG method on RRAM devices.
In both methods, we introduce several specific adaptations required for implementing them in RRAM systems. Both the first and second steps are built on a full-stack module called MELISO+, a benchmarking framework designed to evaluate large-scale linear operations in RRAMs~\cite{10766540,vo2025harnessing}.
Using MELISO+ as a subprocess, we demonstrate how data can be encoded into physical devices and how algorithmic modifications mitigate device non-idealities that would otherwise cause solution divergence.

Before launching the RRAM-based solver, we perform the model preparation steps in the CPU, which involves deploying the problem and converting the model to its canonical form.
Then, we perform Ruiz rescaling and diagonal preconditioning, as proposed in~\cite{lu2024cupdlpjlgpuimplementationrestarted}, to improve the performance of PDHG.
These processes are commonly used in PDHG implementations in the literature and will be used as a first step in both our proposed method and the method implemented in GPU, which serves as our baseline for comparison.

In-memory implementation requires significant changes, novel algorithms, and theoretical guarantees to enable efficient and reliable implementation in RRAM devices. The method development revolves around three significant challenges and opportunities associated with RRAM devices: First, embedding data into RRAM devices is both energy- and time-intensive, making it imperative to design methods that minimize the number of write operations into the matrices. Unlike the matrices, changing the vectors is trivial and only requires altering the supplied current. Thus, it is important to develop methods that minimally change the matrix entries and freely alter vector elements. Second, device non-idealities—such as cycle-to-cycle and device-to-device variability—create a significant discrepancy between the programmed and the realized conductance states, requiring algorithmic robustness against inherent write errors. Third, once data are embedded, RRAM devices enable highly efficient matrix–vector multiplications, delivering orders-of-magnitude improvements in latency and energy consumption compared to traditional architectures.
This makes the cost of matrix vector multiplication significantly lower.

\subsection{Encoding constraint matrices}
The low latency and energy MVM operation in RRAMs allows iterative optimization methods that rely on repeated MVMs to be executed at scale, provided that only small and incremental modifications to the stored matrices are required.

\begin{algorithm}[!htbp]
    \caption{\textsc{buildSymBlock}$(\mathbf{K})$}
    \label{alg:build-sym-block}
    \DontPrintSemicolon
    \SetKwInput{KwInput}{Input}
    \SetKwInput{KwOutput}{Output}
    \SetKwFunction{EncodeToDevice}{encodeToDevice}
    
    \KwInput{Matrix $\mathbf{K}\in\mathbb{R}^{m\times n}$ on host}
    \KwOutput{Encoded block $\mathbf{M}_d \in\mathbb{R}^{(m+n)\times(m+n)}$ on accelerator}
    
    \BlankLine
    \KwStep~1: \textbf{Build symmetric block on host}\;
\quad $\mathbf{M}\gets
    \begin{bmatrix}
        \mathbf{0}_{m\times m} & \mathbf{K}\\
        \mathbf{K}^\top & \mathbf{0}_{n\times n}
    \end{bmatrix}$\;
\BlankLine
    \KwStep~2: \textbf{Encode once to accelerator}\;
    \quad $\mathbf{M}_d \gets \EncodeToDevice(\mathbf{M})$\;
\BlankLine
    \textbf{return} $\mathbf{M}_d$
\end{algorithm}

\vspace{3mm}
Together, these factors motivate the design of novel algorithms for in-memory computing such that algorithms may appear highly counterintuitive from the perspective of conventional CPU or GPU architectures, where the dominant sources of energy consumption and latency arise from entirely different bottlenecks.
We start the process with preparation for performing operator norm estimation~\cite{chambolle2011first}—that is, $\|\mathbf{K}\|_2$ given $\mathbf{K} \in \mathbb{R}^{m \times n}$ be the constraint matrix—and the PDHG method.
First, we construct a new symmetric block matrix, $\mathbf{M}$:


\begin{equation*}
    \mathbf{M} =
    \begin{bmatrix}
    \mathbf{0}_{m\times m} & \mathbf{K} \\
    \mathbf{K}^\top & \mathbf{0}_{n\times n}
    \end{bmatrix}
\end{equation*}

The matrix is encoded to the analog accelerator only once using Algorithm~\ref{alg:build-sym-block}. 
Unlike typical GPU implementations, this \emph{encode-once} strategy is critical for our system, as iteratively reprogramming both $\mathbf{K}$ and $\mathbf{K}^\top$ would be prohibitively costly on analog crossbars.
By employing $\mathbf{M}$, we can efficiently calculate the matrix-vector products required by our methods:

\begin{equation*}
    \mathbf{M} \begin{bmatrix} \mathbf{y} \\ \mathbf{x} \end{bmatrix} =
    \begin{bmatrix} \mathbf{K}\mathbf{x} \\ \mathbf{K}^\top\mathbf{y} \end{bmatrix}
\end{equation*}

This formulation will be used in both the step size estimation and PDHG implementation steps of the proposed method.

We describe our proposed method to solve linear optimization problems for in-memory device architectures. The first phase in our approach involves encoding the aggregated constraint matrix $\mathbf{M}$ on the RRAM as described in Algorithm \ref{alg:build-sym-block}.


\begin{figure*}[!htbp]
    \centering
    \includegraphics[width=\textwidth]{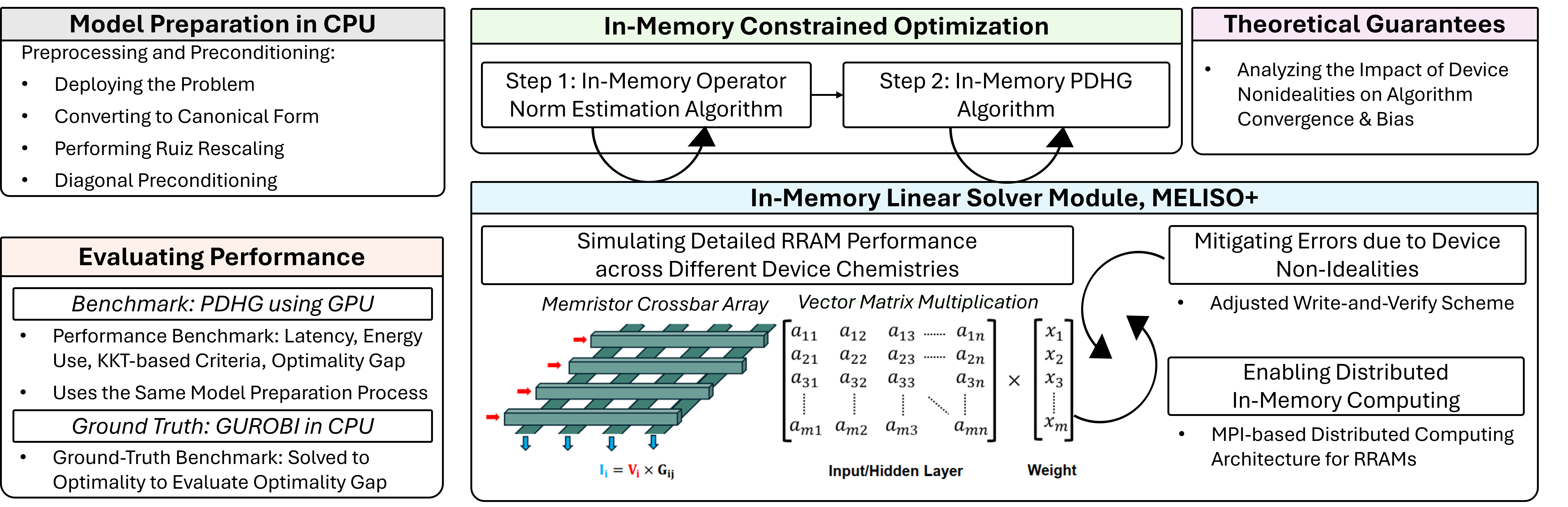}
    \caption{Overview of the GPU/MELISO+ framework for PDHG with matrix-vector multiplication (MVM) acceleration.}
\label{fig:overview}
\end{figure*}

Algorithm~\ref{alg:MVM-on-accelerators}, employs the symmetric block matrix $\mathbf{M}$ for all required MVMs to handle inputs for full-vector operations (in the operator norm estimation phase) and partial-vector operations (in the PDHG phase).

\begin{algorithm}[!htbp]
\caption{\textsc{matmulAccel}$(\mathbf{M}_d,\ \mathbf{u}_d,\ \texttt{mode})$}
\label{alg:MVM-on-accelerators}
\DontPrintSemicolon
\SetKwInput{KwInput}{Input}
\SetKwInput{KwOutput}{Output}
\SetKwFunction{Launch}{launchKernel}
\SetKwFunction{Length}{length}
\KwInput{Encoded $\mathbf{M}_d$; vector $\mathbf{u}_d$;
\texttt{mode} $\in \{\texttt{full},\texttt{A@x},\texttt{AT@y}\}$}
\KwOutput{Result vector(s)}
\BlankLine
\KwStep~1: \textbf{Pad input vector based on mode}\;
\uIf{\texttt{mode} = \texttt{full}}{
    $\mathbf{v}_d \gets \mathbf{u}_d$\;
}
\ElseIf{\texttt{mode} = \texttt{A@x}}{
    $\mathbf{v}_d \gets [\textsc{zeros}(m);\ \mathbf{u}_d]$\;
}
\Else{
    $\mathbf{v}_d \gets [\mathbf{u}_d;\ \textsc{zeros}(n)]$\;
}
\BlankLine
\KwStep~2: \textbf{Perform single device MVM}\;
\quad $\mathbf{w}_d \gets \Launch(\mathbf{M}_d,\ \mathbf{v}_d)$\;
\BlankLine
\KwStep~3: \textbf{Split and return appropriate output}\;
\quad $(\mathbf{t}_d,\ \mathbf{s}_d) \gets \big(\mathbf{w}_{d}[1{:}m],\ \mathbf{w}_{d}[m{+}1{:}m{+}n]\big)$\;
\uIf{\texttt{mode}=\texttt{full}}{\textbf{return} $\mathbf{w}_d$}
\uElseIf{\texttt{mode}=\texttt{A@x}}{\textbf{return} $\mathbf{t}_d$}
\uElse{\textbf{return} $\mathbf{s}_d$}
\end{algorithm}

\vspace{3mm}

Specifically, it first pads the input vector with zeros based on the selected \texttt{mode} (\texttt{full}, \texttt{A@x}, or \texttt{AT@y}).
Next, it executes a single MVM operation on the RRAM-based solver using an encoded-on-device symmetric matrix, $\mathbf{M}_d$.
Finally, the algorithm splits the resulting vector and returns the appropriate segment corresponding to the selected \texttt{mode}.
Moreover, to reduce the impact of device non-idealities, we employ the error-reduction techniques introduced in~\cite{vo2025harnessing} to ensure that $\mathbf{M}$ is encoded to the highest degree of accuracy. 

\subsection{Estimating Operator Norm via the Lanczos Method}
The convergence condition of PDHG requires coupled step sizes $\tau,\sigma>0$ such that $\tau\sigma \|\mathbf{K}\|_2^2 < 1$~\cite{chambolle2011first}. Since computing $\|\mathbf{K}\|_2$ exactly entails a full singular value decomposition (SVD), which is prohibitively expensive for large-scale problems, the \emph{power iteration} method is generally adopted to approximate the operator norm efficiently. Given a random initialization $\mathbf{v}^{\{0\}} \in \mathbb{R}^n$ with $\|\mathbf{v}^{\{0\}}\|_2=1$, we repeatedly apply the following updates:
\begin{equation}
    \mathbf{v}^{\{k+1\}} \;\leftarrow\;
\frac{\mathbf{K}^\top \mathbf{K}\,\mathbf{v}^{\{k\}}}
         {\|\mathbf{K}^\top \mathbf{K}\,\mathbf{v}^{\{k\}}\|_2},
    \qquad k = 0,1,2,\ldots.
\label{eq:power-iteration}
\end{equation}

As $k\to\infty$, the sequence $\{\mathbf{v}^{\{k\}}\}$ in~\eqref{eq:power-iteration} converges to the dominant eigenvector of
$\mathbf{K}^\top \mathbf{K}$, and the Rayleigh quotient,

\begin{equation*}
    \lambda^{\{k\}}
    = \frac{(\mathbf{v}^{\{k\}})^\top \mathbf{K}^\top \mathbf{K}\,\mathbf{v}^{\{k\}}}
           {(\mathbf{v}^{\{k\}})^\top \mathbf{v}^{\{k\}}}
\end{equation*}

approximates the largest eigenvalue of $\mathbf{K}^\top \mathbf{K}$. Therefore, the operator norm can be estimated as: $\|\mathbf{K}\|_2 \;\approx\; \sqrt{\lambda^{\{k\}}}$ after a modest number of iterations. In practice, a few iterations of \eqref{eq:power-iteration} are needed to obtain a reliable
estimate of $\|\mathbf{K}\|_2$, thereby enabling step-size selection through the condition $\eta < \|\mathbf{K}\|_2^{-1}$. This approach avoids costly matrix factorizations and ensures that the PDHG iterations remain stable even for large and sparse problem instances.

When implemented on an RRAM-based computing substrate, the products of iterative MVM are subject to inherent physical noise, despite best efforts to eliminate them. This stochasticity can corrupt the trajectory of the iterates, potentially degrading the quality of the eigenvector approximation and preventing the method from converging to a reliable estimate. To mitigate these effects and achieve more robust and rapid convergence, we therefore employ the \emph{Lanczos iteration}~\cite{doi:10.1137/1.9781421407944}. As a Krylov subspace method, the Lanczos algorithm iteratively constructs an orthonormal basis for the subspace spanned by the power iterates, $\{\mathbf{v}^{\{0\}}, (\mathbf{K}^\top\mathbf{K})\mathbf{v}^{\{0\}}, \dots, (\mathbf{K}^\top\mathbf{K})^{k-1}\mathbf{v}^{\{0\}}\}$.
The projection of $\mathbf{K}^\top\mathbf{K}$ onto this basis yields a small symmetric tridiagonal matrix, $\mathbf{T}_k \in \mathbb{R}^{k \times k}$, whose eigenvalues—commonly known as Ritz values, denoted by $v_{\text{Ritz}}$)—converge rapidly to the extreme eigenvalues of the original matrix.
The operator norm is then estimated as $\|\mathbf{K}\|_2 \approx \sqrt{\lambda_{\max}(\mathbf{T}_k)}$. By finding an optimal solution within this subspace, the Lanczos method provides a more stable and accurate estimate in fewer iterations, making it well-suited for our hardware implementation.

Since $\mathbf{M}$ is symmetric in our formulation , we can estimate its dominant eigenvalue, denoted by $\hat{\lambda}_{\max}(\mathbf{M})$, efficiently using the Lanczos iteration (Algorithm~\ref{alg:lanczos-svd}), where each Lanczos step consumes one full-vector MVM. We prove that the resulting estimate $\hat{\lambda}_{\max}(\mathbf{M})$ is in fact equivalent to the estimated operator norm $\sigma_{\max}(\mathbf{K})$ due to the following claim.

\begin{algorithm}[!htbp]
    \caption{\textsc{LanczosSVD}$(\mathbf{K},\ k_{\max},\ \epsilon)$}
    \label{alg:lanczos-svd}
    \DontPrintSemicolon
    \SetKwInput{KwInput}{Input}
    \SetKwInput{KwOutput}{Output}
    \SetKwFunction{BuildM}{buildSymBlock}
    \SetKwFunction{MVM}{matmulAccel}
    \SetKwFunction{Eigvalsh}{eigvalsh}
    
    \KwInput{Matrix $\mathbf{K}\in\mathbb{R}^{m\times n}$; max iterations $k_{\max}$; tolerance $\epsilon$}
    \KwOutput{Estimate of dominant singular value $\sigma_1$ of $\mathbf{K}$}
    
    \BlankLine
    \KwStep~1: \textbf{Setup}\;
    \quad $\mathbf{M}_d \gets \BuildM(\mathbf{K})$\;
    \quad Initialize $\mathbf{v}_0 \in \mathcal{N}(\mathbf{0},\mathbf{I}_{m+n})$, $\mathbf{v}_{-1} \gets \mathbf{0}$, $\beta_0 \gets 0$\;
    \quad $\mathbf{v}_0 \gets \mathbf{v}_0 / \|\mathbf{v}_0\|_2$\;
    \quad Initialize empty lists $\boldsymbol{\alpha}$, $\boldsymbol{\beta}$\;
    
    \BlankLine
    \KwStep~2: \textbf{Lanczos Iteration}\;
    \For{$j \gets 0$ \KwTo $k_{\max}-1$}{
        $\mathbf{w}_j \gets \MVM(\mathbf{M}_d, \mathbf{v}_j)$\;
    \BlankLine
        $\mathbf{w}_j \gets \mathbf{w}_j - \beta_j \mathbf{v}_{j-1}$\;
    \BlankLine
        $\alpha_j \gets \mathbf{v}_j^\top \mathbf{w}_j$\;
    \BlankLine
        $\mathbf{w}_j \gets \mathbf{w}_j - \alpha_j \mathbf{v}_j$\;
        $\beta_{j+1} \gets \|\mathbf{w}_j\|_2$\;
    
        \BlankLine
        Append $\alpha_j$ to $\boldsymbol{\alpha}$;
    Append $\beta_{j+1}$ to $\boldsymbol{\beta}$\;
    
        \BlankLine
        \If{$\beta_{j+1} < \epsilon$}{\textbf{break}\;}
    
        \BlankLine
        $\mathbf{v}_{j+1} \gets \mathbf{w}_j / \beta_{j+1}$\;
    }
    
    \BlankLine
    \BlankLine
    
    \KwStep~3: \textbf{Compute the Final Estimate }\;
    \quad Construct symmetric tridiagonal matrix $\mathbf{T}$ from $\boldsymbol{\alpha}$ and $\boldsymbol{\beta}$\;
    \quad $v_{\text{ritz}} \gets \Eigvalsh(\mathbf{T})$
    \quad $\sigma_1 \gets \max(|v_{\text{ritz}}|)$\;
    
    \BlankLine
    \textbf{return} $\sigma_1$
\end{algorithm}

\begin{proposition}
    The dominant eigenvalue of $\mathbf{M}$, $\lambda_{\max}(\mathbf{M})$, is equivalent to the dominant singular value of $\mathbf{K}$, $\sigma_{\max}(\mathbf{K})$. In other words, $\lambda_{\max}(\mathbf{M}) \equiv \sigma_{\max}(\mathbf{K})$.
\end{proposition}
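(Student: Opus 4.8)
The plan is to exploit the direct algebraic link between the block structure of $\mathbf{M}$ and the singular value decomposition (SVD) of $\mathbf{K}$. First I would write the SVD as $\mathbf{K} = \mathbf{U}\boldsymbol{\Sigma}\mathbf{V}^\top$, with $\mathbf{U}\in\mathbb{R}^{m\times m}$, $\mathbf{V}\in\mathbb{R}^{n\times n}$ orthogonal and $\boldsymbol{\Sigma}$ carrying the singular values $\sigma_1 \ge \sigma_2 \ge \cdots \ge 0$. Denoting by $\mathbf{u}_i$ and $\mathbf{v}_i$ the left and right singular vectors, the defining relations are $\mathbf{K}\mathbf{v}_i = \sigma_i \mathbf{u}_i$ and $\mathbf{K}^\top \mathbf{u}_i = \sigma_i \mathbf{v}_i$. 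Because $\mathbf{M}$ is symmetric, its spectrum is real, so it suffices to locate all its eigenvalues and read off the one of largest magnitude.

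The core of the argument is to produce eigenpairs of $\mathbf{M}$ explicitly. For each nonzero $\sigma_i$ I would propose the candidates $\mathbf{w}_i^{\pm} = \tfrac{1}{\sqrt{2}}\bigl[\mathbf{u}_i^\top,\ \pm\mathbf{v}_i^\top\bigr]^\top \in \mathbb{R}^{m+n}$ and verify by block multiplication that $\mathbf{M}\mathbf{w}_i^{\pm} = \pm\sigma_i\,\mathbf{w}_i^{\pm}$. This holds because the top block of $\mathbf{M}\mathbf{w}_i^{\pm}$ is $\tfrac{1}{\sqrt 2}\mathbf{K}(\pm\mathbf{v}_i) = \pm\tfrac{\sigma_i}{\sqrt 2}\mathbf{u}_i$ while the bottom block is $\tfrac{1}{\sqrt 2}\mathbf{K}^\top\mathbf{u}_i = \tfrac{\sigma_i}{\sqrt 2}\mathbf{v}_i$, together matching $\pm\sigma_i\,\mathbf{w}_i^{\pm}$. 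Equivalently, any eigenpair $\mathbf{M}[\mathbf{p}^\top,\mathbf{q}^\top]^\top = \lambda[\mathbf{p}^\top,\mathbf{q}^\top]^\top$ forces $\mathbf{K}^\top\mathbf{K}\mathbf{q} = \lambda^2\mathbf{q}$, so $\lambda^2$ is an eigenvalue of $\mathbf{K}^\top\mathbf{K}$ and hence $|\lambda|\le\sigma_{\max}$, with equality attained by $\mathbf{w}_1^{\pm}$.

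To upgrade this from a one-sided inclusion to a full characterization I would run a dimension count. The vectors $\{\mathbf{w}_i^{+},\mathbf{w}_i^{-}\}$ over all $\sigma_i>0$ are mutually orthonormal, since orthonormality of the $\mathbf{u}_i$ and of the $\mathbf{v}_i$ lifts to the stacked vectors, yielding $2r$ eigenvectors with $r=\operatorname{rank}(\mathbf{K})$. The remaining $(m+n)-2r$ directions are spanned by vectors $[\mathbf{u}^\top,\mathbf{0}^\top]^\top$ with $\mathbf{u}\in\ker(\mathbf{K}^\top)$ and $[\mathbf{0}^\top,\mathbf{v}^\top]^\top$ with $\mathbf{v}\in\ker(\mathbf{K})$, each annihilated by $\mathbf{M}$; these supply the zero eigenvalues, whose multiplicity also absorbs the dimension gap $|m-n|$. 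Thus $\operatorname{spec}(\mathbf{M}) = \{\pm\sigma_i : \sigma_i>0\}\cup\{0\}$, so the dominant eigenvalue is exactly $\sigma_{\max}(\mathbf{K})$, matching the quantity $\max(|v_{\text{ritz}}|)$ returned by Algorithm~\ref{alg:lanczos-svd}.

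I expect the only genuinely delicate step to be the completeness bookkeeping rather than the eigenvector verification: one must correctly tally the zero eigenvalues induced by $\ker(\mathbf{K})$, $\ker(\mathbf{K}^\top)$, and the asymmetry $m\neq n$, so that the explicitly constructed $\pm\sigma_i$ pairs are shown to exhaust the nonzero spectrum. Once that accounting is secured, the identity $\lambda_{\max}(\mathbf{M}) \equiv \sigma_{\max}(\mathbf{K})$—understood as the eigenvalue of largest magnitude, consistent with the symmetry-induced pairing $\pm\sigma_i$—is immediate.
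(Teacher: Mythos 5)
Your proposal is correct and follows essentially the same route as the paper: write the SVD $\mathbf{K}=\mathbf{U}\boldsymbol{\Sigma}\mathbf{V}^\top$, stack the singular vectors into $\mathbf{w}_i^{\pm}=[\mathbf{u}_i^\top,\pm\mathbf{v}_i^\top]^\top$, and verify by block multiplication that these are eigenvectors of $\mathbf{M}$ with eigenvalues $\pm\sigma_i$. The one substantive difference is that you also prove the converse inclusion --- via the observation that any eigenpair of $\mathbf{M}$ forces $\mathbf{K}^\top\mathbf{K}\mathbf{q}=\lambda^2\mathbf{q}$ (hence $|\lambda|\le\sigma_{\max}$), backed up by the dimension count over $\ker(\mathbf{K})$, $\ker(\mathbf{K}^\top)$, and the $2r$ stacked vectors. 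The paper's proof stops after exhibiting $\pm\sigma_i$ as eigenvalues, which strictly speaking only yields $\lambda_{\max}(\mathbf{M})\ge\sigma_{\max}(\mathbf{K})$; your completeness step is exactly what is needed to rule out a larger eigenvalue and close the argument, so your version is the more rigorous of the two.
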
\label{thm:Lanczos}

\begin{proof}
    Proof is provided in Appendix \ref{subsubsec:proof_estimate_norm}.
\end{proof}

Leveraging on this formulation and result, we use Algorithm~\ref{alg:lanczos-svd} to obtain the theoretical PDHG step sizes to satisfy $\tau\sigma\|\mathbf{K}\|_2^2 < 1$, where $\|\mathbf{K}\|_2 \approx \hat{\sigma}_{\max}(\mathbf{K}) = \hat{\lambda}_{\max}(\mathbf{M})$.
Specifically, we use $\tau = \sigma = \eta / \hat{\sigma}_{\max}(\mathbf{K})$~\cite{applegate2023faster,lu2024cupdlpjlgpuimplementationrestarted, NEURIPS2021_a8fbbd3b}, with $\eta = 0.95$.

\subsection{Solving LPs via In-Memory PDHG}

\label{sec:PDHG_Elements_MELISO+}
The step size estimations are used to perform an enhanced PDHG outlined in Algorithm~\ref{alg:pdhg-enhanced}.
In each iteration, the PDHG executes exactly two MVMs on the accelerator, while all proximal operators and vector algebra remain on the host:

\begin{equation*}
    \mathbf{t}_d \leftarrow \mathbf{K}\mathbf{\bar{x}}^{\{k\}} \text{(dual step)};\quad \mathbf{s}_d \leftarrow \mathbf{K}^\top\mathbf{y}^{\{k+1\}} \text{(primal step)}
\end{equation*}
Since $\mathbf{M}$ is encoded once, there is neither $\mathbf{K}$ nor $\mathbf{K}^\top$ reprogramming overhead across iterations.

For adaptive step sizes, we employ the Nesterov momentum~\cite{nesterov1983method}. Compared to the approach proposed in~\cite{lu2024cupdlpjlgpuimplementationrestarted}, our method is simpler since:

\begin{itemize}
    \item In~\cite{lu2024cupdlpjlgpuimplementationrestarted}, a heuristic line search on the global step parameter is performed (with $\tau = \eta/\omega$, $\sigma = \eta\omega$). For every iteration $(k+1)$, one trial of the PDHG update is computed with the current $\eta$. Next, $\eta$ is accepted if it satisfies an inequality derived from the theoretical convergence rate $\mathcal{O}(1/k)$~\cite{lu2023unified}. Otherwise, $\eta$ is shrunk, and the trial is repeated. This potentially requires multiple tries per outer iteration when a trial is rejected, incurring extra MVMs.
    \item The $(\tau,\sigma)$ step sizes in our approach are adapted by a deterministic scalar factor $\theta_k \geq 0$. Thus, no extra MVMs beyond the usual two per iteration are introduced.
\end{itemize}

Convergence is evaluated by a lightweight, separate routine at the host level, stopping the method when either the KKT-based residuals or the relative differences in primal updates fall below prescribed tolerances.

Given a linear problem in its standard form, as described in Equation~\eqref{eq:standard-lp}, the Karush-Kuhn-Tucker (KKT) conditions at an optimal solution $(\mathbf{x}^\star,\mathbf{y}^\star)$ are given as:

\begin{align}
  &\text{Primal feasibility:}   &   \mathbf{K} \mathbf{x}^\star = \mathbf{b},\;\;
\mathbf{x}^\star \ge \mathbf{0}, \label{eq:kkt-primal} &\\
  &\text{Dual feasibility:}    &   \mathbf{K}^\top \mathbf{y}^\star \le \mathbf{c}, \label{eq:kkt-dual}\qquad &\\
  &\text{Complementarity:}    &  x^\star_i\!\left(c_i - (\mathbf{K}^\top \mathbf{y}^\star)_i\right) = 0 &,\ \forall i.
\label{eq:kkt-comp}
\end{align}

The KKT residuals directly measure deviation from \eqref{eq:kkt-primal}-\eqref{eq:kkt-comp} and are consequently natural, scale-aware stopping metrics for FOMs. Let $[\mathbf{v}]_+ := \max\{\mathbf{v}, \mathbf{0}\}$ applied element-wise. At the $(k{+}1)$-th iteration, we define the following residuals:

\begin{equation*}
     \begin{aligned}
      r_{\mathrm{pri}}  &:= \frac{\|\mathbf{K} \mathbf{x}^{\{k+1\}} - \mathbf{b}\|_2}{\,1 + \|\mathbf{b}\|_2\,} \\
      r_{\mathrm{dual}} &:= \frac{\|\mathbf{c} - \mathbf{K}^\top\mathbf{y}^{\{k+1\}} - \lambda\|_2}{\,1 + \|\mathbf{c}\|_2\,}, \\
      r_{\mathrm{iter}} &:= \frac{\|[\mathbf{x}^{\{k\}}-\mathbf{x}^{\{k+1\}}]_+\|_2}{\,1 + \|\mathbf{x}^{\{k+1\}}\|_2\,} \\
      r_{\mathrm{gap}}  &:= \frac{|\mathbf{c}^\top \mathbf{x}^{\{k+1\}} - \mathbf{K}^\top \mathbf{y}^{\{k+1\}}|}{\,1 + |\mathbf{c}^\top \mathbf{x}^{\{k+1\}}|
    + |\mathbf{K}^\top \mathbf{y}^{\{k+1\}}|\,}.
    \end{aligned}   
\end{equation*}

with $\lambda = [\mathbf{c} - \mathbf{K}^\top \mathbf{y}]_+$.

The denominators presented in our residuals implement a combined absolute and relative scaling, which is standard in modern first-order solvers~\cite{lu2024cupdlpjlgpuimplementationrestarted, NEURIPS2021_a8fbbd3b}. For a target tolerance $\varepsilon>0$ (we selected $\varepsilon=1e^{-6}$ in our study), we stop when the following condition is satisfied:

\begin{equation*}
  \max\{\, r_{\mathrm{pri}},\; r_{\mathrm{dual}},\; r_{\mathrm{iter}}\;,r_{\mathrm{gap}}\} \;\le\; \varepsilon.
\end{equation*}

\section{Theoretical Guarantees}\label{sec:theoretical_guarantees}
A critical concern when deploying iterative algorithms on analog hardware is the impact of device nonlinearities on theoretical convergence guarantees.
In our implementation of the RRAM-based solver, device nonlinearities inevitably introduce perturbations into the PDHG update equations, causing inexact primal-dual updates.
In other words, the analog MVMs associated with the Lanczos iteration and the primal-dual updates ($\mathbf{K}\mathbf{\bar{x}}^{\{k\}}$ and $\mathbf{K}^\top\mathbf{y}^{\{k\}}$) are subject to noise and drift caused by different sources~\cite{Gulafshan2025, 9761991}.
As a result, each iteration of these algorithms on analog accelerators effectively uses a perturbed operator. Consequentially, we formally express the inexact PDHG updates using the standard formulation of LPs, as described in Equation~\ref{eq:standard-lp}, onto to Equation~\ref{eq:pdhg-lp} as follows:
\begin{equation}
\boxed{
    \begin{aligned}
        \mathbf{x}^{\{k+1\}} &\leftarrow \operatorname{proj}_{\mathcal{X}}\!\bigl\{\mathbf{x}^{\{k\}} - \tau\,(\mathbf{c} - \mathbf{\widetilde{K}}_{\{k\}}^\top \mathbf{y}^{\{k\}})\bigr\},\\[3.0pt]
        \mathbf{y}^{\{k+1\}} &\leftarrow \operatorname{proj}_{\mathcal{Y}}\!\bigl\{\mathbf{y}^{\{k\}} + \sigma\Big[\mathbf{b} - \mathbf{\widetilde{K}}_{\{k\}}(2\mathbf{x}^{\{k+1\}}-\mathbf{x}^{\{k\}})\Big]\bigr\}.
    \end{aligned}
}
\label{eq:inexact-pdhg}
\end{equation}
where $\mathbf{\tilde{K}}_{\{k\}}^\top := \mathbf{K}^\top(1 + \pmb{\xi}^{\{k\}})$ and $\mathbf{\tilde{K}}_{\{k\}} := \mathbf{K}(1 + \pmb{\zeta}^{\{k\}})$ are perturbed matrices of $\mathbf{K}^\top$ and $\mathbf{K}$ observed at the $k$-th iteration, respectively.
These perturbations, $\{\pmb{\xi}^{\{k\}},\pmb{\zeta}^{\{k\}}\}$, can reflect random thermal noise, electronic noise, or systematic drift in the hardware. Although our results showcase that all the problems we solved achieved comparable accuracies with the GPU implementation of PDHG, it is still important to provide the theoretical guarantees for model performance. 

\subsection{Noisy In-Memory Lanczos}\label{subsubsec:noisy_inm_lanczos}
Through Proposition~1, we have established that $\lambda_{\max}(\mathbf{M}) = \sigma_{\max}(\mathbf{K})$. We assume that for every iteration $k$, the accumulated hardware noise is bounded, i.e., $||\mathbf{E}_k|| \leq \epsilon_{max}$. Thus, after $K$ iterations, we have: $||\mathbf{E}_{k|k = K}|| \leq K \epsilon_{\max}$,



\begin{lemma}[Perturbed Lanczos relation]\label{lem:app-lanczos-perturbed}
Consider Lanczos with full reorthogonalization on $\mathbf{M}$ with noisy matrix-vector products $\mathbf{w}_j = \mathbf{M}\mathbf{q}_j + \pmb{\zeta}_j$, where $\|\pmb{\zeta}_j\|\le \epsilon_{\max}$ for all $j$. Let $\mathbf{Q}_k^\top \mathbf{Q}_k = \mathbf{I}$ and let $\mathbf{\tilde{T}}_k$ denote the tridiagonal matrix computed by the RRAM-based accelerators, which is the perturbed version of the exact $\mathbf{T}_k$. Then, almost surely,
\[
\mathbf{M}\mathbf{Q}_k \;=\; \mathbf{Q}_k\,\mathbf{\tilde{T}}_k \;+\; \mathbf{r}_k \mathbf{e}_k^\top \;+\; \mathbf{E}_k,
\qquad \text{with} \quad \|\mathbf{E}_k\| \le k\,\epsilon_{\max},
\]
where $\mathbf{r}_k$ is the residual vector at the $k$-th iteration, and $\mathbf{E}_k$ is the accumulated perturbation due to hardware noise.
\end{lemma}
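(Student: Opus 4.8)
The plan is to derive the stated identity directly from the noisy three-term recurrence executed in Algorithm~\ref{alg:lanczos-svd} and then collect the per-iteration relations into a single matrix equation. First I would write out one step of the recurrence, identifying the Lanczos vectors $\mathbf{v}_j$ (the columns of $\mathbf{Q}_k$, written $\mathbf{q}_j$ in the statement). At iteration $j$ the accelerator returns $\mathbf{w}_j = \mathbf{M}\mathbf{v}_j + \pmb{\zeta}_j$ with $\|\pmb{\zeta}_j\|\le\delta$; the host then subtracts $\beta_j\mathbf{v}_{j-1}$ and $\alpha_j\mathbf{v}_j$ and normalizes, so that $\beta_{j+1}\mathbf{v}_{j+1}$ equals the residual of these subtractions. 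Rearranging this single identity isolates
\begin{equation*}
\mathbf{M}\mathbf{v}_j = \beta_j\mathbf{v}_{j-1} + \alpha_j\mathbf{v}_j + \beta_{j+1}\mathbf{v}_{j+1} - \pmb{\zeta}_j,
\end{equation*}
in which the noise appears as an additive $-\pmb{\zeta}_j$ that is the only departure from the exact Lanczos relation.

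Next I would assemble these identities for $j=0,\dots,k-1$ into the columns of a matrix equation. With $\mathbf{Q}_k = [\mathbf{v}_0,\dots,\mathbf{v}_{k-1}]$, the coefficients $\alpha_j$ and $\beta_j$ populate the computed tridiagonal $\mathbf{\tilde{T}}_k$, and the interior couplings reproduce $\mathbf{Q}_k\mathbf{\tilde{T}}_k$ for every column except the last: there the term $\beta_k\mathbf{v}_k$ points outside $\mathrm{span}(\mathbf{Q}_k)$ and is exactly the residual $\mathbf{r}_k\mathbf{e}_k^\top$ with $\mathbf{r}_k=\beta_k\mathbf{v}_k$. Stacking the noise columns as $\mathbf{E}_k = -\sum_{j=0}^{k-1}\pmb{\zeta}_j\mathbf{e}_{j+1}^\top$ yields precisely $\mathbf{M}\mathbf{Q}_k = \mathbf{Q}_k\mathbf{\tilde{T}}_k + \mathbf{r}_k\mathbf{e}_k^\top + \mathbf{E}_k$. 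Here the role of reorthogonalization is to secure the hypothesis $\mathbf{Q}_k^\top\mathbf{Q}_k=\mathbf{I}$: without it the Lanczos vectors lose orthogonality under perturbation, whereas reorthogonalization restores it to working precision and licenses treating $\mathbf{Q}_k$ as orthonormal.

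For the norm bound I would express $\mathbf{E}_k$ as a sum of rank-one terms and apply the triangle inequality for the spectral norm:
\begin{equation*}
\|\mathbf{E}_k\| \le \sum_{j=0}^{k-1}\|\pmb{\zeta}_j\mathbf{e}_{j+1}^\top\| = \sum_{j=0}^{k-1}\|\pmb{\zeta}_j\| \le k\,\epsilon_{\max},
\end{equation*}
identifying the per-step bound $\delta$ with $\epsilon_{\max}$. The \emph{almost surely} qualifier I would attribute to the random initialization $\mathbf{v}_0$ together with the random hardware noise, which generically precludes premature Krylov breakdown and ensures $\beta_{j+1}>0$ for all $j<k$, so that the recurrence is well defined up to step $k$.

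The main obstacle I anticipate is the bookkeeping around reorthogonalization: in a noisy Lanczos the orthogonalization itself operates on the computed (noisy) vectors, so one must argue that the corrections it introduces are absorbable into $\mathbf{E}_k$, or are of lower order than the analog MVM noise, and therefore do not inflate the $k\,\epsilon_{\max}$ bound. I would resolve this by modeling all host-side linear algebra as exact relative to the accelerator noise, so that $\pmb{\zeta}_j$ remains the sole perturbation entering column $j$ and the clean column-stacking estimate above is preserved.
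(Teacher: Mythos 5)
Your proof is correct, but it takes a genuinely different route from the paper's — and it is the one that actually establishes what the lemma asserts. The paper's own proof never derives the matrix identity $\mathbf{M}\mathbf{Q}_k = \mathbf{Q}_k\mathbf{\tilde{T}}_k + \mathbf{r}_k\mathbf{e}_k^\top + \mathbf{E}_k$: it invokes the Bauer--Fike theorem to bound $|\lambda_{\max}(\mathbf{\tilde{T}}_k)-\lambda_{\max}(\mathbf{T}_k)|$ by $\|\mathbf{\tilde{T}}_k-\mathbf{T}_k\|$ and then simply asserts that this difference "is determined by the noise matrix $\mathbf{E}_k$" with $\|\mathbf{E}_k\|\le k\,\epsilon_{\max}$; in other words, it takes the accumulation bound as given and jumps to the eigenvalue consequence that is really needed downstream in Theorem~\ref{thm:app-lanczos-ritz}. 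You instead construct $\mathbf{E}_k$ explicitly: you write the noisy three-term recurrence $\mathbf{M}\mathbf{v}_j=\beta_j\mathbf{v}_{j-1}+\alpha_j\mathbf{v}_j+\beta_{j+1}\mathbf{v}_{j+1}-\pmb{\zeta}_j$ with the \emph{computed} coefficients (so that the noise also contaminating $\alpha_j$ and $\beta_{j+1}$ is automatically absorbed), stack the columns to obtain $\mathbf{E}_k=-\sum_{j}\pmb{\zeta}_j\mathbf{e}_{j+1}^\top$, and bound its norm by the triangle inequality. This buys an actual derivation of the claimed relation, makes transparent where the factor $k$ comes from (one rank-one noise column per step), and — since the $\mathbf{e}_{j+1}$ are orthonormal — would even yield the sharper estimate $\|\mathbf{E}_k\|\le\|\mathbf{E}_k\|_F\le\sqrt{k}\,\delta$, so the stated $k\,\epsilon_{\max}$ is safe but loose. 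Your treatment of reorthogonalization (modeling host-side arithmetic as exact so that $\pmb{\zeta}_j$ is the sole perturbation entering column $j$) is precisely the assumption needed to license the hypothesis $\mathbf{Q}_k^\top\mathbf{Q}_k=\mathbf{I}$, which the paper leaves implicit; the only cosmetic caveat is the identification of $\delta$ with $\epsilon_{\max}$, a notational inconsistency already present in the lemma statement itself.
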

\begin{proof}
    Proof has been provided in Appendix \ref{subsec:proof-app-lanczos-perturbed}
\end{proof}

\begin{algorithm}[!htbp]
    \caption{\textsc{Enhanced-PDHG}}
    \label{alg:pdhg-enhanced}
    \DontPrintSemicolon
    \SetKwInput{KwInput}{Input}
    \SetKwInput{KwOutput}{Output}
    \SetKwFunction{RuizScale}{RuizRescaling}
    \SetKwFunction{DiagPre}{DiagonalPrecond}
    \SetKwFunction{Build}{buildSymBlock}
    \SetKwFunction{Matmul}{matmulAccel}
    \SetKwFunction{BoxProj}{proj}
    \SetKwFunction{SpecNorm}{LanczosSVD}
    \SetKw{KwStep}{Step}  

    \KwInput{Matrix $\mathbf{K}$; vector $\mathbf{b}$, $\mathbf{c}$; bounds $\mathbf{lb},\mathbf{ub}$;\\
    \text{ } max PDHG iterations $K$; tolerance $\varepsilon$; safety margin $\eta$;\\
    \text{ } Ruiz iterations $S$; max Lanczos iterations $J$;\\
    \text{ } Nesterov acceleration parameter $\gamma \geq 0$.}
    \KwOutput{Scaled-back solution $(\mathbf{x}_\mathrm{orig},\mathbf{y}_\mathrm{orig})$.}

    \BlankLine
    \KwStep~0: \textbf{Scaling and Preconditioning}\;
    \quad $(\mathbf{D}_1,\mathbf{D}_2) \leftarrow \RuizScale(\mathbf{K},S)$\;
    \quad $\tilde{\mathbf{K}} \leftarrow \mathbf{D}_1\,\mathbf{K}\,\mathbf{D}_2$;\quad
         $\tilde{\mathbf{b}} \leftarrow \mathbf{D}_1\,\mathbf{b}$;\quad
         $\tilde{\mathbf{c}} \leftarrow \mathbf{D}_2\,\mathbf{c}$\;
    \quad $\tilde{\mathbf{lb}} \leftarrow \mathbf{D}_2^{-1}\mathbf{lb}$;\quad
         $\tilde{\mathbf{ub}} \leftarrow \mathbf{D}_2^{-1}\mathbf{ub}$\;
    \quad $(\mathbf{T},\pmb{\Sigma}) \leftarrow \DiagPre(\tilde{\mathbf{K}})$\;
    \quad $\mathbf{M}_d \leftarrow \Build(\tilde{\mathbf{K}})$\;

    \BlankLine
    \KwStep~1: \textbf{Estimating Operator Norm}\;
    \quad $\rho \leftarrow \SpecNorm\!\big(\mathbf{M}_d,\,J\big)$\;

    \BlankLine
    \KwStep~2: \textbf{Initialization}\;
    \quad $\mathbf{x}^{\{0\}} \leftarrow \BoxProj\Big[\mathcal{N}(\mathbf{0},\mathbf{I}_{n \times n}),\,\tilde{\mathbf{lb}},\,\tilde{\mathbf{ub}}\Big]$;\quad
         $\mathbf{x}_{\mathrm{prev}} \leftarrow \mathbf{x}^{\{0\}}$\;
    \quad $\mathbf{y}^{\{0\}} \leftarrow \mathcal{N}(\mathbf{0},\mathbf{I}_{m \times m})$\;
    \quad $\tau \leftarrow \eta/\rho$;\quad $\sigma \leftarrow \eta/\rho$\;

    \BlankLine
    \KwStep~3: \textbf{Main loop}\;
    \For{$k=0$ \KwTo $K-1$}{
      $\theta_k \leftarrow 1/\sqrt{1 + 2 \gamma \tau}$\;
      $\tau \leftarrow \theta_k \tau$\;
      $\sigma \leftarrow \sigma/\theta_k$\;

      $\bar{\mathbf{x}}^{\{k\}} \leftarrow \mathbf{x}^{\{k\}} + \theta_k\big(\mathbf{x}^{\{k\}} - \mathbf{x}_{\mathrm{prev}}\big)$\;
      $\mathbf{v}_{\mathrm{bar}} \leftarrow \Matmul(\mathbf{M}_d,\,\bar{\mathbf{x}}^{\{k\}},\,\texttt{A@x})$\;
      $\mathbf{y}^{\{k+1\}} \leftarrow \mathbf{y}^{\{k\}} + \sigma\,\pmb{\Sigma}\,\big(\mathbf{v}_{\mathrm{bar}} - \tilde{\mathbf{M}}\big)$\;
      $\mathbf{x}_{\mathrm{prev}} \leftarrow \mathbf{x}^{\{k\}}$\;
      $\mathbf{u} \leftarrow \Matmul(\mathbf{M}_d,\,\mathbf{y}^{\{k+1\}},\,\texttt{A@Ty})$\;
      $\mathbf{g} \leftarrow \tilde{\mathbf{c}} + \mathbf{u}$\;
      $\mathbf{x}^{\{k+1\}} \leftarrow \BoxProj\!\big(\mathbf{x}^{\{k\}} - \tau\,\mathbf{T}\,\mathbf{g},\ \tilde{\mathbf{lb}},\ \tilde{\mathbf{ub}}\big)$\;

      \BlankLine
      \KwStep~4: \textbf{Convergence check}\;
      Compute $r_{\mathrm{pri}},\; r_{\mathrm{dual}},\; r_x,\; r_{\mathrm{gap}}$\;
      \If{$\max(r_{\mathrm{pri}},\; r_{\mathrm{dual}},\; r_x,\; r_{\mathrm{gap}}) < \varepsilon$}{\texttt{break}}
    }

    \BlankLine
    \textbf{return} $\big(\ \mathbf{x}_\mathrm{orig}=\mathbf{D}_2\,\mathbf{x}^{\{k+1\}},\ \ \mathbf{y}_\mathrm{orig}=\mathbf{D}_1^\top\,\mathbf{y}^{\{k+1\}} \ \big)$
\end{algorithm}

We leverage Lemma \ref{lem:app-lanczos-perturbed} in helping us derive the expected Ergodic bound for the noisy Ritz estimator for $L=\max_i \lambda_i(\mathbf{M})$. 

\begin{theorem}[Expected ergodic bound for noisy Ritz estimator of $L$]
\label{thm:app-lanczos-ritz}
Let $L=\max_i \lambda_i(\mathbf{M})$ denote the largest eigenvalue of $\mathbf{M}$, let $\theta_k = \lambda_{\max}(\mathbf{\tilde{T}}_k)$ denote the largest Ritz value at iteration $k$, and let $\bar\theta_K = \frac{1}{K}\sum_{k=1}^K \theta_k$ represent the averaged (ergodic) estimate. Then, there exist constants $C>0$ and $0 < \rho < 1$ such that
\[
\mathbb{E}_{\mathbf{q}_1,\pmb{\zeta}}\Big[|\theta_k - L|\Big]
\ \le\ C\rho^{(k-1)} + k\,\epsilon_{\max},
\]
\[
\mathbb{E}_{\mathbf{q}_1,\pmb{\zeta}}\Big[|\bar\theta_K - L|\Big]
\ \le\ \mathcal{O}\!\left(\tfrac{1}{K}\right) + \mathcal{O}(K\epsilon_{\max}).
\]
where the starting vector $\mathbf{q}_1$ is drawn uniformly at random from the unit sphere, $\rho = (\lambda_{2}/\lambda_1)$, represents the spectral ratio.
\end{theorem}
\begin{proof}
    Proof has been provided in Appendix \ref{subsec:proof-app-lanczos-ritz}.
\end{proof}

The theoretical result in Theorem \ref{thm:app-lanczos-ritz} captures the expected geometric convergence of the largest Ritz value under random initialization, with the convergence rate governed by the spectral ratio $\rho$. In Theorem \ref{thm:app-lanczos-ritz}, we assume that the algebraic multiplicity of eigenvalues of $\mathbf{M}$ is one, and the expectation is taken with respect to the distribution of the noise and the initial random vector. 

\subsection{Inexact In-Memory PDHG Updates}~\label{subsec:Inexact In-Memory PDHG Updates}
Next, we consider the inexact PDHG updates~\eqref{eq:inexact-pdhg} for the LP saddle form~\eqref{eq:pdhg-lp}, with RRAM-accelerated MVMs modeled as:
\begin{equation*}
\mathcal{M}(\mathbf{K};\mathbf{x}) = \mathbf{K}\mathbf{x} + \pmb{\zeta}, \qquad \mathcal{M}(\mathbf{K}^\top;\mathbf{y}) = \mathbf{K}^\top \mathbf{y} + \pmb{\xi}.
\end{equation*}

We show that even with additive noise in the updates, we can establish convergence bounds for PDHG under the following set of reasonable assumptions on the perturbations.
\begin{assumption}[Independence]
We assume that the noise sequence does not adversarially correlate with the iterates.
In other words, $\{\pmb{\xi}^{\{k\}},\pmb{\zeta}^{\{k\}}\}$ are independent random variables; $\{\pmb{\xi}^{\{k\}}$ is independent of $\{\mathbf{y}^{\{k\}},\mathbf{x}^{\{k\}}\}$; $\pmb{\zeta}^{\{k\}}$ is independent of $\{\mathbf{y}^{\{k\}},\mathbf{x}^{\{k\}},\mathbf{x}^{\{k+1\}}\}$.
\end{assumption}

\begin{assumption}[Unbiased perturbations]
We assume that $\widetilde{K}_{\{k\}}^\top$ and $\widetilde{K}_{\{k\}}$ are unbiased with respect to $\mathbf{K}^\top$ and $\mathbf{K}$, respectively.
In other words, given everything that has happened up to iteration $k$ (all past iterates and noise samples), the expected value of the next noise vector(s) is zero—that is, $\mathbb{E}[\pmb{\xi}^{\{k\}}] = 0$ and $\mathbb{E}[\pmb{\zeta}^{\{k\}}]= 0$.
\end{assumption}

\begin{assumption}[Bounded noise level]
We assume that the noise magnitude is uniformly bounded. In other words, $\|\pmb{\xi}^{\{k\}}\| \leq \delta $ and $\|\pmb{\zeta}^{\{k\}}\|\leq \delta$ for some fixed $\delta > 0$.
\end{assumption}

\begin{assumption}[Finite variance]
We assume that the noise variance is uniformly bounded. In other words, $\mathbb{E}\|\pmb{\xi}^{\{k\}}\|^2 \le \varphi_{\pmb{\xi}^{\{k\}}}^2$ and $\mathbb{E}\|\pmb{\zeta}^{\{k\}}\|^2 \le \varphi_{\pmb{\zeta}^{\{k\}}}^2$ for some fixed $\delta ^2 \geq \varphi_{\pmb{\xi}}^2>0$ and $\delta ^2 \geq \varphi_{\pmb{\zeta}}^2>0$.     
\end{assumption}

Under these assumptions, we establish the convergence of the robust PDHG method under a coupling rule driven by a noisy estimate. We first start with the robust step sizes from the noisy $\hat{L}$ as specified by Lemma \ref{lem:app-stepsize}.
\begin{lemma}[Safe coupling]\label{lem:app-stepsize}
Let $L=\|\mathbf{K}\|_2$ and suppose $|\hat{L}-L|\le \bar{\delta} L$, either in expectation or with high probability. If $\tau\sigma\ =\ \vartheta/\hat{L}^2$ given the safety margin, $\vartheta$, follows $\vartheta\in\bigl(0,(1-\bar{\delta})^2\bigr)$, then:
    $$
    \tau\sigma\,L^2\ \le\ \vartheta/(1-\bar{\delta})^2\ <\ 1
    $$
\end{lemma}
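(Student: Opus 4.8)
The plan is to reduce the two-sided deviation hypothesis on $\hat{L}$ to a single one-sided lower bound, and then propagate it through the coupling rule $\tau\sigma=\vartheta/\hat{L}^2$ by elementary algebra. First I would rewrite $|\hat{L}-L|\le\bar{\delta}L$ as the pair $L(1-\bar{\delta})\le\hat{L}\le L(1+\bar{\delta})$, observing that only the lower bound $\hat{L}\ge L(1-\bar{\delta})$ is needed for the conclusion. At this point I would flag the one genuinely load-bearing structural fact: the admissible range $\vartheta\in\bigl(0,(1-\bar{\delta})^2\bigr)$ is nonempty only when $(1-\bar{\delta})^2>0$, which forces $\bar{\delta}<1$ and hence $1-\bar{\delta}>0$. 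This positivity is exactly what licenses the division that follows, and it is the single place where the argument would silently break if $\bar{\delta}$ were allowed to reach or exceed $1$.

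Next I would substitute the coupling rule into the target quantity, writing $\tau\sigma L^2=(\vartheta/\hat{L}^2)L^2=\vartheta(L/\hat{L})^2$, so that the only random factor is $(L/\hat{L})^2$. Using $\hat{L}\ge L(1-\bar{\delta})>0$ gives $L/\hat{L}\le 1/(1-\bar{\delta})$, and squaring is permissible since both sides are nonnegative, yielding $(L/\hat{L})^2\le 1/(1-\bar{\delta})^2$. Multiplying through by $\vartheta>0$ produces the first claimed inequality $\tau\sigma L^2\le\vartheta/(1-\bar{\delta})^2$. The strict terminal inequality is then immediate from the upper constraint $\vartheta<(1-\bar{\delta})^2$: dividing both sides by the positive quantity $(1-\bar{\delta})^2$ gives $\vartheta/(1-\bar{\delta})^2<1$, which chains with the previous bound to close the proof.

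Finally, I would reconcile the two interpretive modes stated in the hypothesis. When $\hat{L}\ge L(1-\bar{\delta})$ holds with high probability (the form supplied by the noisy Lanczos analysis above), every inequality in the chain holds on that same event, so $\tau\sigma L^2\le\vartheta/(1-\bar{\delta})^2<1$ transfers verbatim with high probability. When the deviation is controlled only in expectation, I would first apply Markov's inequality to $\mathbb{E}|\hat{L}-L|\le\bar{\delta}L$ to recover a high-probability lower bound on $\hat{L}$, and then reuse the pointwise estimate on the resulting good event; the slack built into $\vartheta<(1-\bar{\delta})^2$ is precisely what absorbs any margin inflation introduced by this conversion. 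I do not anticipate a real obstacle: the lemma is in essence an interval-arithmetic estimate, and the only care required is to keep all inequality directions consistent and to record the positivity of $1-\bar{\delta}$ before dividing.
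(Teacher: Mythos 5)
Your proposal is correct and follows essentially the same route as the paper's proof: lower-bound $\hat{L}$ by $L(1-\bar{\delta})$, take reciprocals, square, and multiply by $\vartheta$ to obtain $\tau\sigma L^2\le \vartheta/(1-\bar{\delta})^2<1$. Your explicit remarks on the positivity of $1-\bar{\delta}$ and on converting the in-expectation hypothesis to a high-probability event via Markov's inequality are sensible refinements that the paper leaves implicit, but they do not change the argument.
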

\begin{proof}
Proof provided in Appendix \ref{subsec:proof-app-stepsize}
\end{proof}

Lemma~\ref{lem:app-stepsize} provides a rule for safely choosing step sizes for the PDHG algorithm when the operator norm, $L$, is estimated with some error. By selecting step sizes based on the noisy estimate, $\hat{L}$, and a chosen safety margin $\vartheta$, it guarantees the standard PDHG convergence condition remains satisfied, ensuring the algorithm is sufficiently stable despite relying on an inexact norm calculation. 

We now establish Lemma \ref{lem:one-step-pdhg} to quantify the one-step inexact PDHG updates under assumptions of independent, unbiased and bounded perturbations of finite variance.

\begin{lemma}[One-step PDHG inequality with error terms]\label{lem:one-step-pdhg}
For any feasible solution $(\mathbf{x}, \mathbf{y})$, the iterates satisfy:
$$
\begin{aligned}
&\langle \mathbf{c}-\mathbf{K}^\top \mathbf{y}^{\{k+1\}},\mathbf{x}^{\{k+1\}}-\mathbf{x}\rangle\\
& \qquad+\langle \mathbf{K}(\mathbf{x}^{\{k+1\}})-\mathbf{b},\mathbf{y}^{\{k+1\}}-\mathbf{y}\rangle \leq \\
&\quad \frac{1}{2\tau}\left[\|\mathbf{x}-\mathbf{x}^{\{k\}}\|^2-\|\mathbf{x}-\mathbf{x}^{\{k+1\}}\|^2 - \|\mathbf{x}^{\{k\}} - \mathbf{x}^{\{k+1\}}\|^2\right] \\
&\quad +\frac{1}{2\sigma}\left[\|\mathbf{y}-\mathbf{y}^{\{k\}}\|^2-\|\mathbf{y}-\mathbf{y}^{\{k+1\}}\|^2 - \|\mathbf{y}^{\{k\}} - \mathbf{y}^{\{k+1\}}\|^2\right] \\
&\quad - \langle \mathbf{K}(\mathbf{x}^{\{k+1\}} - \mathbf{x}^{\{k\}}),\mathbf{y}^{\{k+1\}}-\mathbf{y}\rangle \\
&\quad +\langle \pmb{\xi}_k,\mathbf{x}-\mathbf{x}^{\{k+1\}}\rangle + \langle \pmb{\zeta}_k,\mathbf{y}-\mathbf{y}^{\{k+1\}}\rangle.
\end{aligned}
$$

\end{lemma}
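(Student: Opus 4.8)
The plan is to read off the two first-order optimality (variational) inequalities of the projected proximal steps in the inexact iteration~\eqref{eq:inexact-pdhg}, using the additive model $\mathcal{M}(\mathbf{K};\mathbf{x})=\mathbf{K}\mathbf{x}+\pmb{\zeta}$ and $\mathcal{M}(\mathbf{K}^\top;\mathbf{y})=\mathbf{K}^\top\mathbf{y}+\pmb{\xi}$, so that the noise is exposed as an explicit linear perturbation of each gradient. Writing the primal step as the projection of $\mathbf{x}^{\{k\}}-\tau(\mathbf{c}-\mathbf{K}^\top\mathbf{y}^{\{k\}}-\pmb{\xi}_k)$ onto $\mathcal{X}$, the obtuse-angle characterization of $\operatorname{proj}_{\mathcal{X}}$ gives, for every feasible $\mathbf{x}$,
\[
\langle \mathbf{c}-\mathbf{K}^\top\mathbf{y}^{\{k\}}-\pmb{\xi}_k,\ \mathbf{x}^{\{k+1\}}-\mathbf{x}\rangle \ \le\ \tfrac{1}{\tau}\,\langle \mathbf{x}^{\{k\}}-\mathbf{x}^{\{k+1\}},\ \mathbf{x}^{\{k+1\}}-\mathbf{x}\rangle,
\]
and the analogous inequality holds for the dual step, whose projection argument is $\mathbf{y}^{\{k\}}+\sigma(\mathbf{b}-\mathbf{K}(2\mathbf{x}^{\{k+1\}}-\mathbf{x}^{\{k\}})-\pmb{\zeta}_k)$.

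Next I would apply the three-point (polarization) identity $\langle \mathbf{a}-\mathbf{b},\mathbf{b}-\mathbf{c}\rangle=\tfrac12(\|\mathbf{a}-\mathbf{c}\|^2-\|\mathbf{a}-\mathbf{b}\|^2-\|\mathbf{b}-\mathbf{c}\|^2)$ to each right-hand side. This turns $\tfrac{1}{\tau}\langle \mathbf{x}^{\{k\}}-\mathbf{x}^{\{k+1\}},\mathbf{x}^{\{k+1\}}-\mathbf{x}\rangle$ into exactly the triple $\tfrac{1}{2\tau}(\|\mathbf{x}-\mathbf{x}^{\{k\}}\|^2-\|\mathbf{x}-\mathbf{x}^{\{k+1\}}\|^2-\|\mathbf{x}^{\{k\}}-\mathbf{x}^{\{k+1\}}\|^2)$ appearing in the statement, and likewise for the dual with $1/(2\sigma)$. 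Because the noise enters the gradients linearly, it survives only as the inner products $\langle\pmb{\xi}_k,\cdot\rangle$ and $\langle\pmb{\zeta}_k,\cdot\rangle$, which I would move to the right-hand side; these become the two error terms on the last line (their sign following the sign convention fixed for the additive-noise model).

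The coupling term is produced by expanding the extrapolation $2\mathbf{x}^{\{k+1\}}-\mathbf{x}^{\{k\}}=\mathbf{x}^{\{k+1\}}+(\mathbf{x}^{\{k+1\}}-\mathbf{x}^{\{k\}})$ inside the dual inequality, which splits $\langle \mathbf{K}(2\mathbf{x}^{\{k+1\}}-\mathbf{x}^{\{k\}})-\mathbf{b},\mathbf{y}^{\{k+1\}}-\mathbf{y}\rangle$ into the gap contribution $\langle \mathbf{K}\mathbf{x}^{\{k+1\}}-\mathbf{b},\mathbf{y}^{\{k+1\}}-\mathbf{y}\rangle$ and the bilinear remainder $\langle \mathbf{K}(\mathbf{x}^{\{k+1\}}-\mathbf{x}^{\{k\}}),\mathbf{y}^{\{k+1\}}-\mathbf{y}\rangle$. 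Adding the primal and dual inequalities and using the adjoint relation $\langle \mathbf{K}^\top\mathbf{u},\mathbf{v}\rangle=\langle\mathbf{u},\mathbf{K}\mathbf{v}\rangle$ throughout, I would move this remainder across to obtain the single cross term $-\langle \mathbf{K}(\mathbf{x}^{\{k+1\}}-\mathbf{x}^{\{k\}}),\mathbf{y}^{\{k+1\}}-\mathbf{y}\rangle$ of the claim.

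The step I expect to be the main obstacle is the bilinear bookkeeping that isolates precisely that one cross term. The choice $\theta=1$ is exactly what makes the coupling collapse to a single local term rather than a telescoping pair across iterations, but one must also reconcile which dual iterate the primal residual is expressed in: the primal update only \emph{sees} $\mathbf{y}^{\{k\}}$, whereas the statement reports the residual at $\mathbf{y}^{\{k+1\}}$. I would bridge the two using the skew-symmetry identity $\langle F(\mathbf{z}^{\{k+1\}})-F(\mathbf{z}),\mathbf{z}^{\{k+1\}}-\mathbf{z}\rangle=0$, valid for the affine monotone operator $F(\mathbf{z})=(\mathbf{c}-\mathbf{K}^\top\mathbf{y},\,\mathbf{K}\mathbf{x}-\mathbf{b})$ with $\mathbf{z}=(\mathbf{x},\mathbf{y})$, which lets me pass between the $\mathbf{y}^{\{k\}}$- and $\mathbf{y}^{\{k+1\}}$-referenced gaps without changing the inequality. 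The noise terms themselves are harmless at this stage — being linear, they demand only careful sign tracking — and will be controlled in the subsequent theorem through the zero-mean and bounded-variance assumptions.
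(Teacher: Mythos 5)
Your overall route is the same as the paper's: you read off the variational (obtuse-angle) inequality of each projected step with the noise exposed as a linear perturbation of the gradient, convert the right-hand sides with the three-point identity $\tfrac{1}{\tau}\langle \mathbf{x}^{\{k\}}-\mathbf{x}^{\{k+1\}},\mathbf{x}^{\{k+1\}}-\mathbf{x}\rangle=\tfrac{1}{2\tau}(\|\mathbf{x}-\mathbf{x}^{\{k\}}\|^2-\|\mathbf{x}-\mathbf{x}^{\{k+1\}}\|^2-\|\mathbf{x}^{\{k\}}-\mathbf{x}^{\{k+1\}}\|^2)$, split the extrapolation $2\mathbf{x}^{\{k+1\}}-\mathbf{x}^{\{k\}}$ to isolate the cross term, and add. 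That is precisely steps 1--2 of the paper's proof plus its final ``combining and rearranging'' step, which you spell out in more detail than the paper does.

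The one place where your argument has a genuine gap is the device you propose for reconciling the dual iterate in the primal residual. The primal VI bounds $\langle \mathbf{c}-\mathbf{K}^\top\mathbf{y}^{\{k\}},\mathbf{x}^{\{k+1\}}-\mathbf{x}\rangle$, whereas the lemma reports $\langle \mathbf{c}-\mathbf{K}^\top\mathbf{y}^{\{k+1\}},\mathbf{x}^{\{k+1\}}-\mathbf{x}\rangle$. The skew-symmetry identity $\langle F(\mathbf{z}^{\{k+1\}})-F(\mathbf{z}),\mathbf{z}^{\{k+1\}}-\mathbf{z}\rangle=0$ is true, but it exchanges the gradient argument between the \emph{iterate} $\mathbf{z}^{\{k+1\}}$ and the \emph{comparison point} $\mathbf{z}=(\mathbf{x},\mathbf{y})$; it says nothing about replacing $\mathbf{y}^{\{k\}}$ by $\mathbf{y}^{\{k+1\}}$, since $\mathbf{y}^{\{k\}}$ is neither of those two points. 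Carrying out the substitution honestly leaves the residual bilinear term $-\langle \mathbf{K}^\top(\mathbf{y}^{\{k+1\}}-\mathbf{y}^{\{k\}}),\mathbf{x}^{\{k+1\}}-\mathbf{x}\rangle$ on the right-hand side, which is not sign-definite and is absent from the stated bound, so your bridge as described does not close the argument. (To be fair, the paper's own proof hides exactly this step inside ``combining them together and rearranging the terms''; the clean resolutions are either to state the lemma with the primal residual at $\mathbf{y}^{\{k\}}$, or to keep the extra cross term explicitly and dispose of it later in the telescoping/step-size argument alongside $-\langle \mathbf{K}(\mathbf{x}^{\{k+1\}}-\mathbf{x}^{\{k\}}),\mathbf{y}^{\{k+1\}}-\mathbf{y}\rangle$. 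You correctly identified this as the main obstacle; the specific identity you chose just does not resolve it.)
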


\begin{proof}
Proof provided in Appendix \ref{subsec:proof-one-step-pdhg}
\end{proof}
Lemma~\ref{lem:one-step-pdhg} presents a one-step inequality that describes the progress of the PDHG algorithm in the presence of noise. It bounds the relationship between successive iterates and a feasible solution, similar to a standard PDHG analysis. Critically, it modifies the standard inequality to include explicit terms that account for the additive noise from the inexact matrix-vector products. We now establish the ergodic convergence of PDHG given our assumptions in Theorem \ref{thm:ergodic-pdhg}.

\begin{theorem}[Ergodic convergence of PDHG under noise assumptions]\label{thm:ergodic-pdhg}
Assume a saddle point $(\mathbf{x}^*, \mathbf{y}^*)$ exists for the LP, and the step sizes $\tau, \sigma > 0$ satisfying $\tau\sigma L^2 < 1$. Then, for the ergodic average $\bar{\mathbf{z}}_K = \frac{1}{K}\sum_{k=0}^{K-1} (\mathbf{x}^{\{k+1\}}, \mathbf{y}^{\{k+1\}})$, the expected primal-dual gap is bounded by:
$$
\mathbb{E}[\text{gap}(\bar{\mathbf{z}}_K)] \le \frac{C_0}{K} + \frac{\delta\sqrt{2}}{\sqrt{K}} + \text{constant},
$$
where $C_0$ depends on the initial distance to the optimal solution and $\delta$ is the noise bound.
\end{theorem}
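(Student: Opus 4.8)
The plan is to start from Lemma~\ref{lem:one-step-pdhg}, sum the one-step inequality over $k=0,\dots,K-1$, and convert the accumulated left-hand side into the primal--dual gap of the ergodic average. First I would observe that, because the Lagrangian $L(\mathbf{x},\mathbf{y})=\mathbf{c}^\top\mathbf{x}-\mathbf{y}^\top(\mathbf{K}\mathbf{x}-\mathbf{b})$ is bilinear---hence convex in $\mathbf{x}$ and concave in $\mathbf{y}$---the left-hand side of Lemma~\ref{lem:one-step-pdhg} equals $L(\mathbf{x}^{\{k+1\}},\mathbf{y})-L(\mathbf{x},\mathbf{y}^{\{k+1\}})$ for any fixed comparator $(\mathbf{x},\mathbf{y})$. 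Applying Jensen's inequality to $\bar{\mathbf{z}}_K$ then gives $K\,[L(\bar{\mathbf{x}}_K,\mathbf{y})-L(\mathbf{x},\bar{\mathbf{y}}_K)]\le\sum_{k=0}^{K-1}\big(L(\mathbf{x}^{\{k+1\}},\mathbf{y})-L(\mathbf{x},\mathbf{y}^{\{k+1\}})\big)$, so taking the supremum over feasible $(\mathbf{x},\mathbf{y})$ produces $K\cdot\mathrm{gap}(\bar{\mathbf{z}}_K)$ on the left.

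Next I would telescope the right-hand side. The paired distance terms collapse to $\tfrac{1}{2\tau}\|\mathbf{x}-\mathbf{x}^{\{0\}}\|^2+\tfrac{1}{2\sigma}\|\mathbf{y}-\mathbf{y}^{\{0\}}\|^2$, which, maximized over the bounded comparator set, defines $C_0$. The leftover pieces are the negative step-difference squares $-\tfrac{1}{2\tau}\|\mathbf{x}^{\{k\}}-\mathbf{x}^{\{k+1\}}\|^2-\tfrac{1}{2\sigma}\|\mathbf{y}^{\{k\}}-\mathbf{y}^{\{k+1\}}\|^2$, the coupling terms $-\langle\mathbf{K}(\mathbf{x}^{\{k+1\}}-\mathbf{x}^{\{k\}}),\mathbf{y}^{\{k+1\}}-\mathbf{y}\rangle$, and the noise terms. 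The coupling terms are treated exactly as in the standard Chambolle--Pock analysis: splitting $\mathbf{y}^{\{k+1\}}-\mathbf{y}=(\mathbf{y}^{\{k+1\}}-\mathbf{y}^{\{k\}})+(\mathbf{y}^{\{k\}}-\mathbf{y})$, the second component telescopes while the first is controlled by a Young's inequality with weight $\sqrt{\sigma/\tau}$, so that the condition $\tau\sigma L^2<1$ (equivalently, positive definiteness of the block metric $\begin{bmatrix}\tau^{-1}\mathbf{I}&-\mathbf{K}^\top\\-\mathbf{K}&\sigma^{-1}\mathbf{I}\end{bmatrix}$) guarantees they are dominated by the negative step-difference squares.

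The crux is the stochastic bookkeeping of $\sum_k\big(\langle\pmb{\xi}_k,\mathbf{x}-\mathbf{x}^{\{k+1\}}\rangle+\langle\pmb{\zeta}_k,\mathbf{y}-\mathbf{y}^{\{k+1\}}\rangle\big)$, which I would split along the natural filtration. Writing $\langle\pmb{\xi}_k,\mathbf{x}-\mathbf{x}^{\{k+1\}}\rangle=\langle\pmb{\xi}_k,\mathbf{x}-\mathbf{x}^{\{k\}}\rangle+\langle\pmb{\xi}_k,\mathbf{x}^{\{k\}}-\mathbf{x}^{\{k+1\}}\rangle$ (and likewise for $\pmb{\zeta}_k$), the first piece is a martingale difference: by the independence and unbiasedness Assumptions, $\pmb{\xi}_k$ is independent of $\mathbf{x}^{\{k\}}$ and of the fixed comparator and has conditional mean zero, so its conditional expectation vanishes. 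The accumulated martingale therefore has zero mean, and by the finite-variance Assumption its $L^2$ norm is $O(\sqrt{K}\,\delta\,D)$ (variances add, with $D$ the comparator-set diameter), so $\mathbb{E}$ of its absolute value is $O(\sqrt{K}\,\delta)$; after division by $K$ this is exactly the $\delta/\sqrt{K}$ term. The correlated piece $\langle\pmb{\xi}_k,\mathbf{x}^{\{k\}}-\mathbf{x}^{\{k+1\}}\rangle$ I would bound via $\langle\pmb{\xi}_k,\mathbf{x}^{\{k\}}-\mathbf{x}^{\{k+1\}}\rangle\le\tfrac{1}{2\tau}\|\mathbf{x}^{\{k\}}-\mathbf{x}^{\{k+1\}}\|^2+\tfrac{\tau}{2}\|\pmb{\xi}_k\|^2$, using the strict margin left in the negative step-difference squares (available since $\tau\sigma L^2<1$) to absorb the first summand, while $\tfrac{\tau}{2}\|\pmb{\xi}_k\|^2\le\tfrac{\tau}{2}\delta^2$ is a lower-order constant folded into the stated bound.

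The main obstacle I anticipate is precisely this coupling between the current noise sample and the current iterate: $\mathbf{x}^{\{k+1\}}$ is a function of $\pmb{\xi}_k$ and $\mathbf{y}^{\{k+1\}}$ of $\pmb{\zeta}_k$, so the naive expectation of the noise inner products does not vanish and a crude $\delta\,D$ bound per step would only yield an $O(\delta)$ floor. The filtration-based split into a zero-mean martingale part---which gives the $\sqrt{K}$-rate through the finite-variance assumption---and a step-difference-correlated part---absorbed by the step-size margin---is the key device that both removes the leading noise bias and produces the advertised $\delta/\sqrt{K}$ decay; the delicate accounting is choosing the Young's-inequality weights and tracking the diameter $D$ so that, after dividing by $K$, only $C_0/K+\delta/\sqrt{K}$ survives.
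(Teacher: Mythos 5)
Your proposal follows the same skeleton as the paper's proof—sum the one-step inequality of Lemma~\ref{lem:one-step-pdhg} over $k$, telescope the distance terms into $C_0$, use $\tau\sigma L^2<1$ to dominate the coupling terms, and bound the accumulated noise in expectation at rate $\sqrt{K}$—but the two arguments part ways on the noise bookkeeping, and the difference is substantive. The paper simply asserts that $N_K=\sum_k\bigl(\langle\pmb{\xi}_k,\mathbf{x}^*-\mathbf{x}^{\{k+1\}}\rangle+\langle\pmb{\zeta}_k,\mathbf{y}^*-\mathbf{y}^{\{k+1\}}\rangle\bigr)$ has zero mean by unbiasedness, then bounds $\mathbb{E}|N_K|\le\sqrt{\mathbb{V}(N_K)}\le\delta\sqrt{2K}$ by summing per-step variances (implicitly treating $\|\mathbf{x}^*-\mathbf{x}^{\{k+1\}}\|$ as bounded by $1$ rather than by a tracked diameter $D$). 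You correctly observe that this zero-mean claim is not automatic: $\mathbf{x}^{\{k+1\}}$ is a deterministic function of $\pmb{\xi}_k$ and $\mathbf{y}^{\{k+1\}}$ of $\pmb{\zeta}_k$, so the inner products are genuinely correlated with the current noise sample. Your filtration split into a martingale-difference part (paired with $\mathbf{x}^{\{k\}}$, which \emph{is} independent of $\pmb{\xi}_k$ under Assumption~1) plus a step-difference-correlated remainder is the right way to make the argument honest, and the martingale part does deliver the $\delta D/\sqrt{K}$ term rigorously.

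The one point you should not wave away is the fate of the correlated remainder. Young's inequality turns $\langle\pmb{\xi}_k,\mathbf{x}^{\{k\}}-\mathbf{x}^{\{k+1\}}\rangle$ into an absorbable $\tfrac{c}{2\tau}\|\mathbf{x}^{\{k\}}-\mathbf{x}^{\{k+1\}}\|^2$ plus a leftover of order $\tau\delta^2/c$ \emph{per iteration}; summed over $K$ steps and divided by $K$ this is a constant $\mathcal{O}(\tau\delta^2)$ floor, not a lower-order term in $K$. So your proof, carried through, establishes $\mathbb{E}[\mathrm{gap}(\bar{\mathbf{z}}_K)]\le C_0/K+\delta D/\sqrt{K}+\mathcal{O}(\tau\delta^2)$, which is strictly weaker than the bound stated in the theorem (whose right-hand side vanishes as $K\to\infty$). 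This is not a defect of your reasoning so much as a consequence of taking the correlation seriously: the paper's proof reaches the cleaner bound only by asserting $\mathbb{E}[N_K]=0$ without addressing the dependence you identified. Either the theorem needs an additional assumption (e.g., conditional unbiasedness of $\pmb{\xi}_k$ given $\mathbf{x}^{\{k+1\}}$, or a step size $\tau\sim1/\sqrt{K}$ that demotes the floor to $\mathcal{O}(\delta^2/\sqrt{K})$), or the stated bound should carry the extra constant term. You should state explicitly which of these you are invoking rather than folding the residual into the advertised rate.
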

\begin{proof}
Proof provided in Appendix \ref{subsec:proof-ergodic-pdhg}
\end{proof}
Theorem~\ref{thm:ergodic-pdhg} establishes the convergence rate for the PDHG algorithm when operating under the defined noise assumptions. It shows that the expected primal-dual gap of the averaged iterates decreases as the number of iterations, $K$, increases. The convergence bound consists of the standard $\mathcal{O}(1/K)$ rate found in noiseless analysis, an error term of $\mathcal{O}(1/\sqrt{K})$ that arises from the noise, and a constant term.

\section{Results}\label{sec:results}
We conduct a set of experiments to compare the performance of the proposed RRAM-based solver with a conventional GPU solver in terms of both latency and energy use. We first describe the experimental platform used to implement the algorithms and evaluate performance metrics, and then report results from solving a range of linear programming problems on RRAM devices with different chemistries and compare them with the corresponding implementations on GPUs.

\subsection{Experimental Setup}
\noindent\textit{Problem Instances:}
in our experiments, we employed several problems from the MIPLIB-2017 database~\cite{MIPLIB2017}. A detailed description of these problem instances is provided in Table \ref{tab:problems-fivecol}. These problems were specifically selected because their dimensions are well-suited to our target hardware architecture. The problem instances span a wide range of sizes, large enough to demonstrate the significant performance and energy-efficiency potential of RRAM-based accelerators for computationally intensive optimization tasks.

\begin{table}[!htbp]
    \centering
    \setlength{\tabcolsep}{1pt}
    \renewcommand{\arraystretch}{1.0}
    \caption{Problem types, sizes, objective values, and solve times.}
    \label{tab:problems-fivecol}
    \begin{tabular}{@{}%
    >{\centering\arraybackslash}p{.18\columnwidth}
    >{\centering\arraybackslash}p{.10\columnwidth}
    >{\centering\arraybackslash}p{.32\columnwidth}
    >{\centering\arraybackslash}p{.20\columnwidth}
    >{\centering\arraybackslash}p{.20\columnwidth}@{}}
    \toprule
    \textbf{Problem} & \textbf{Type} & \textbf{Size ($\mathbf{K}\in\mathbb{R}^{m\times n}$)} & \textbf{Objective Value} & \textbf{Solve Time (s)}\\
    \midrule
    gen-ip002 & Integer & $(24,41)$ & $\text{-}4783.7334$ & $10.52$ \\
    gen-ip016 & Integer & $(24,28)$ & $\text{-}9476.1552$ & $8.14$ \\
    gen-ip021 & Integer & $(28,35)$ & $2361.4542$ & $15.73$ \\
    gen-ip036 & Integer & $(46,29)$ & $\text{-}4606.6796$ & $21.98$ \\
    gen-ip054 & Integer & $(27,30)$ & $6840.9656$ & $12.45$ \\
    neos5 & Mixed & \makecell{$(402,253)$ \\ $n_b=53,\ n_c=10$} & $15.0000$ & $53.21$ \\
    assign1-5-8 & Mixed & \makecell{$(161,156)$ \\ $n_b=130,\ n_c=26$} & $212.0000$ & $58.66$ \\
    \bottomrule
    \end{tabular}
    \begin{minipage}{\columnwidth}
    \footnotesize
    Constraint matrix $\mathbf{K}\in\mathbb{R}^{m\times n}$, with $m$ constraints and $n$ variables.
    \\
    For mixed integer problems, $n_b$ and $n_c$ denote binary and continuous variables, respectively ($n=n_b+n_c$).
    \\
    \textbf{Solve Time} represents the duration of Gurobi's process for solving the relaxed problem.
    \end{minipage}
\end{table}

\noindent\textit{Hardware Platforms:} The RRAM-based accelerators were simulated based on a system architecture consisting of a 4-by-4 array of 64-by-64 RRAM crossbars. This configuration provides a total logical dimension of 256-by-256 for matrix operations, which can natively accommodate the constraint matrices of the selected benchmarks. The performance characteristics, where time-energy complexity is dominated by internal read and write operations, were quantified using our full-stack simulation module, MELISO+~\cite{10766540,vo2025harnessing}. The simulation models two distinct, promising RRAM technologies: EpiRAM~\cite{c30a52423ab14e6c83bcac6aa8cae784}, and TaO$_x$-HfO$_x$~\cite{8510690}. As demonstrated in~\cite{10766540}, these materials exhibit properties consistent with the noise assumptions outlined in Section~\ref{subsec:Inexact In-Memory PDHG Updates}.. The simulation outcomes provide a detailed analysis of model accuracy, energy consumption, and latency patterns throughout the implementation of the algorithm.

The baseline GPU implementation was executed on a single compute node of the Oklahoma State University (OSU) Pete Supercomputer. The compute node has the following specification: One (1) Intel Xeon Gold 6130 Processor (CPU), two (2) NVIDIA Quadro RTX6000 (GPU), and 192GB RAM (Memory). Only a single GPU was used to accelerate all matrix-vector multiplication (MVM) operations within the PDHG and Lanczos algorithms. We measured both computation and data movement costs using the Zeus framework~\cite{zeus-nsdi23} to ensure a comprehensive performance profile.

\noindent\textit{Solution Accuracy:} In all our experiments, we quantify the solution accuracy by reporting the relative error, defined as:

\begin{equation}
    \Delta_{\text{rel}}:= \frac{|z - z^*|}{|z|}
\end{equation}

where $z^*$ are the results produced by the solver under evaluation (RRAM-based or GPU) and $z$ denotes the ground-truth value. For the Lanczos iteration, $\Delta_{\text{rel}}$ is benchmarked against the \texttt{np.linalg.svd()} provided by the Numpy Python package. For PDHG, the relative error, $\Delta_{\text{rel}}$ is benchmarked against the objective value of the problems obtained from the commercial solver Gurobi~\cite{gurobi}.

\begin{table*}[h]
\centering
\setlength{\tabcolsep}{3pt} 
\renewcommand{\arraystretch}{1.1} 
\caption{Comparative overview of optimality gap, total energy consumption, and total latency across all accelerator platforms for the Lanczos and PDHG methods.}
\label{tab:comparative-overview-combined}
\begin{tabular}{l l c cc cc c cc cc}
\toprule
& & \multicolumn{5}{c}{\textbf{\textit{Step 1:} Norm Estimation {via Lanczos} }} & \multicolumn{5}{c}{\textbf{\textbf{\textit{Step 2:} PDHG}}} \\
\cmidrule(lr){3-7} \cmidrule(lr){8-12}
\multirow{2}{*}{\textbf{Problem}} & \multirow{2}{*}{\textbf{Solver}} & \textbf{Gap} & \multicolumn{2}{c}{\textbf{Energy [J]}} & \multicolumn{2}{c}{\textbf{Latency [s]}} & \textbf{Gap} & \multicolumn{2}{c}{\textbf{Energy [J]}} & \multicolumn{2}{c}{\textbf{Latency [s]}} \\
\cmidrule(lr){4-5}\cmidrule(lr){6-7} \cmidrule(lr){9-10}\cmidrule(lr){11-12}
& & ($\Delta_{\text{rel}}$) & \textbf{Total} & \textbf{Factor} & \textbf{Total} & \textbf{Factor} & ($\Delta_{\text{rel}}$) & \textbf{Total} & \textbf{Factor} & \textbf{Total} & \textbf{Factor} \\
\midrule
\multirow{3}{*}{gen-ip002} & gpuPDLP & $---$ & 11.9356 & $---$ & 0.3398 & $---$ & $1.52\text{E}^{-5}$ & 836.7268 & $---$ & 69.2601 & $---$ \\
 & EpiRAM & $1.30\text{E}^{-3}$ & 0.7542 & $15.83\times$ & 0.3354 & $1.01\times$ & $5.02\text{E}^{-6}$ & 18.1730 & $46.04\times$ & 3.5964 & $19.26\times$ \\
 & TaO$_x$-HfO$_x$ & $2.00\text{E}^{-4}$ & 0.0136 & \textbf{877.62$\times$} & 0.0644 & $5.28\times$ & $6.01\text{E}^{-6}$ & 6.5670 & \textbf{127.42$\times$} & 3.8090 & $18.18\times$ \\
\midrule
\multirow{3}{*}{gen-ip016} & gpuPDLP & $---$ & 5.2834 & $---$ & 0.1442 & $---$ & $4.69\text{E}^{-5}$ & 2811.5477 & $---$ & 141.9507 & $---$ \\
 & EpiRAM & $4.10\text{E}^{-3}$ & 2.7538 & $1.92\times$ & 1.3809 & $0.10\times$ & $1.18\text{E}^{-4}$ & 20.5487 & \textbf{136.82$\times$} & 3.7760 & $37.59\times$ \\
 & TaO$_x$-HfO$_x$ & $7.00\text{E}^{-4}$ & 0.0661 & $79.93\times$ & 0.0353 & $4.09\times$ & $1.27\text{E}^{-4}$ & 7.6441 & \textbf{367.80$\times$} & 4.4653 & $31.79\times$ \\
\midrule
\multirow{3}{*}{gen-ip021} & gpuPDLP & $---$ & 4.4468 & $---$ & 0.1219 & $---$ & $2.64\text{E}^{-6}$ & 172.1724 & $---$ & 6.0919 & $---$ \\
 & EpiRAM & $1.80\text{E}^{-3}$ & 1.0060 & $4.42\times$ & 0.1491 & $0.82\times$ & $2.98\text{E}^{-2}$ & 20.8827 & $8.24\times$ & 6.2145 & $0.98\times$ \\
 & TaO$_x$-HfO$_x$ & $1.92\text{E}^{-2}$ & 0.0169 & \textbf{263.12$\times$} & 0.0811 & $1.50\times$ & $2.98\text{E}^{-2}$ & 4.9248 & $34.96\times$ & 2.8739 & $2.12\times$ \\
\midrule
\multirow{3}{*}{gen-ip036} & gpuPDLP & $---$ & 2.8351 & $---$ & 0.1090 & $---$ & $4.77\text{E}^{-6}$ & 365.0960 & $---$ & 8.2201 & $---$ \\
 & EpiRAM & $4.00\text{E}^{-4}$ & 18.4238 & $0.15\times$ & 7.1263 & $0.02\times$ & $5.53\text{E}^{-5}$ & 19.8905 & $18.35\times$ & 7.2341 & $1.14\times$ \\
 & TaO$_x$-HfO$_x$ & $1.70\text{E}^{-3}$ & 0.0762 & $37.21\times$ & 0.1400 & $0.78\times$ & $5.35\text{E}^{-5}$ & 0.3862 & \textbf{945.36$\times$} & 3.6986 & $2.22\times$ \\
\midrule
\multirow{3}{*}{gen-ip054} & gpuPDLP & $---$ & 2.7225 & $---$ & 0.1103 & $---$ & $4.30\text{E}^{-5}$ & 4870.1550 & $---$ & 244.2323 & $---$ \\
 & EpiRAM & $4.00\text{E}^{-4}$ & 4.3371 & $0.63\times$ & 3.6241 & $0.03\times$ & $3.36\text{E}^{-3}$ & 17.9449 & \textbf{271.40$\times$} & 19.2164 & $12.71\times$ \\
 & TaO$_x$-HfO$_x$ & $6.30\text{E}^{-3}$ & 0.0626 & $43.49\times$ & 0.0352 & $3.13\times$ & $6.98\text{E}^{-3}$ & 7.4425 & \textbf{654.36$\times$} & 4.3460 & $56.20\times$ \\
\midrule
\multirow{3}{*}{neos5} & gpuPDLP & $---$ & 2.7500 & $---$ & 0.0895 & $---$ & $5.64\text{E}^{-4}$ & 269.4557 & $---$ & 10.1614 & $---$ \\
 & EpiRAM & $0.00$ & 0.1718 & $16.01\times$ & 0.0846 & $1.06\times$ & $7.69\text{E}^{-3}$ & 0.3508 & \textbf{768.12$\times$} & 0.1899 & $53.51\times$ \\
 & TaO$_x$-HfO$_x$ & $0.00$ & 0.0962 & $28.59\times$ & 0.0433 & $2.07\times$ & $1.48\text{E}^{-2}$ & 0.0539 & \textbf{4999.18$\times$} & 0.0314 & \textbf{323.61$\times$} \\
\midrule
\multirow{3}{*}{assign1-5-8} & gpuPDLP & $---$ & 2.8117 & $---$ & 0.1006 & $---$ & $3.81\text{E}^{-5}$ & 1417.4179 & $---$ & 410.2391 & $---$ \\
 & EpiRAM & $2.80\text{E}^{-3}$ & 1.0730 & $2.62\times$ & 0.7870 & $0.13\times$ & $1.70\text{E}^{-3}$ & 6.5448 & \textbf{768.12$\times$} & 3.7870 & \textbf{108.33$\times$} \\
 & TaO$_x$-HfO$_x$ & $1.00\text{E}^{-4}$ & 0.5353 & $5.25\times$ & 0.1883 & $0.53\times$ & $4.23\text{E}^{-3}$ & 3.7889 & \textbf{374.11$\times$} & 2.0852 & \textbf{196.74$\times$} \\
\bottomrule
\end{tabular}
\end{table*}

\subsection{Energy and Latency Analysis}
\vspace{-1mm}
Table~\ref{tab:comparative-overview-combined} presents a comprehensive performance comparison of the Lanczos and PDHG algorithms on three different hardware accelerators for MVMs: a traditional GPU implementation of PDHG called \texttt{gpuPDLP}, EpiRAM \cite{c30a52423ab14e6c83bcac6aa8cae784}, and TaO$_x$-HfO$x$ \cite{8510690}. For the RRAM-based implementation, the energy and latency savings are quantified as a multiplicative factor (the {Factor} column) relative to the GPU's performance, providing a direct measure of potential efficiency improvement for switching from conventional GPU to the proposed RRAM-based method. Across all test cases, the negligible relative errors confirm that all three platforms reliably converge to high-quality results. We provide a detailed breakdown of measured energy and latency components in Section~\ref{sec:detailed_results} of the Appendix.

The results in Table~\ref{tab:comparative-overview-combined} demonstrate the profound potential of RRAM-based accelerators for energy-efficient computing. This advantage is most pronounced for the PDHG method, where both EpiRAM and TaO$_x$-HfO$_x$ achieve massive reductions in energy consumption, regularly outperforming the GPU by two to three orders of magnitude—with savings reaching nearly 5000$\times$ for the TaO$_x$-HfO$_x$ platform. While running the Lanczos iteration, the RRAM-based platforms continue to demonstrate exceptional energy efficiency. The TaO$_x$-HfO$_x$ accelerator, in particular, reduces energy consumption dramatically compared to the GPU, with savings ranging from approximately 29$\times$ to a remarkable 878$\times$. EpiRAM also shows considerable energy savings for the Lanczos method, though its performance is less consistent across all problem instances.

The analysis of latency demonstrates significant advantages for the proposed IMC method, but also reveals interesting insights into performance trade-offs that are dependent on the algorithm. The speedups for the Lanczos iteration are considerable. The TaO$_x$-HfO$_x$ platform remains highly competitive, outperforming the GPU in most cases with a speedup of up to 5.3$\times$, but the EpiRAM platform generally exhibits higher latency than both the GPU and TaO$_x$-HfO$_x$ for this algorithm. However, the Lanczos iterations account for only a small fraction of the overall latency and energy consumption, e.g. $1-4\%$ and $0.05-9\%$ respectively for the gen-ip002 problem. For the PDHG method, which takes the lion's share of energy use and latency, the savings are significantly more pronouced. Both RRAM-based solvers significantly outperform the GPU—with EpiRAM and TaO$_x$-HfO$_x$ being up to 108$\times$ and 323$\times$ faster, respectively. In fact, these improvements become more prominent as the problem sizes increase, suggesting further benefits in high-scale applications.

All these improvements have been summarized in Table \ref{tab:improvement} for the entire process (i.e., including Steps 1 and 2) in order to demonstrate the total overall improvement in energy consumption and latency in RRAM-based systems compared to equivalent GPU implementations. The results show a clear advantage of using an RRAM-based IMC solver. Among the two RRAM technologies evaluated, the TaO\textsubscript{x}-HfO\textsubscript{x}-based solver consistently delivered superior performance. This improvement is primarily attributed to the physics of the device, which allows lower voltage and duration of programming signals required to adjust the conductance state of the memory cell \cite{8510690}, which in turn enables more efficient analog matrix-vector operations.

\begin{table}[!htbp]
\centering
\caption{Overall Energy \& Latency Improvement of RRAM Solvers over GPU}
\label{tab:improvement}
\begin{tabular}{l cc cc}
\toprule
& \multicolumn{4}{c}{Approx. Improvement Factor over GPU} \\
\cmidrule(lr){2-5}
\textbf{Problem} & \multicolumn{2}{c}{EpiRAM} & \multicolumn{2}{c}{TaO\textsubscript{x}-HfO\textsubscript{x}} \\
\cmidrule(lr){2-3} \cmidrule(lr){4-5}
& Power & Latency & Power & Latency \\
\midrule
gen-ip002     & 45$\times$  & 18$\times$     & 129$\times$  & 18$\times$     \\
gen-ip016     & 121$\times$ & 28$\times$     & 365$\times$  & 32$\times$     \\
gen-ip021     & 8$\times$   & 1$\times$     & 36$\times$   & 2$\times$     \\
gen-ip036     & 10$\times$  & 1$\times$     & 796$\times$  & 2$\times$     \\
gen-ip054     & 219$\times$ & 11$\times$     & 649$\times$  & 56$\times$     \\
neos5         & 521$\times$ & 37$\times$     & 1813$\times$ & 137$\times$     \\
assign1-5-8   & 186$\times$ & 90$\times$     & 328$\times$  & 180$\times$     \\
\bottomrule
\end{tabular}
\end{table}

\begin{figure*}[!htbp]
  \centering
  \begin{subfigure}{0.32\linewidth}
    \centering
    \includegraphics[width=\linewidth]{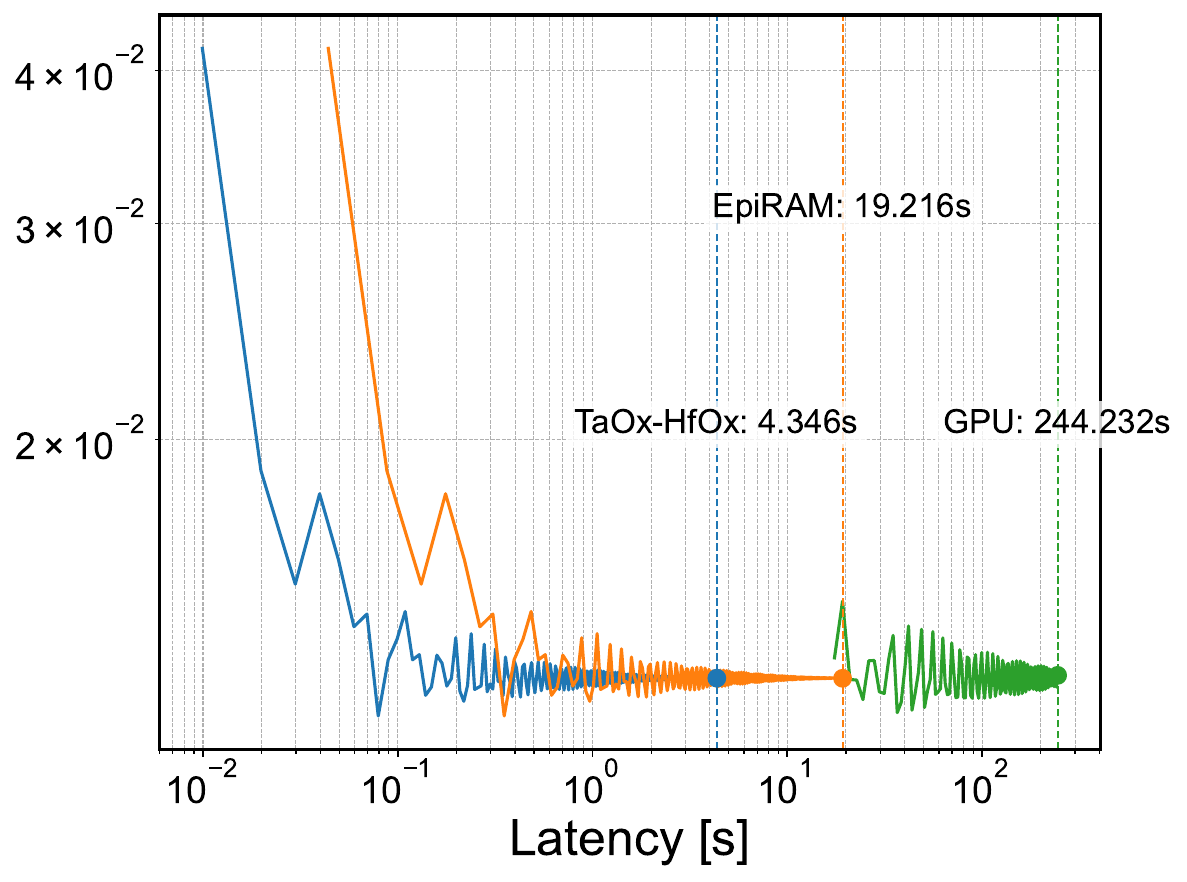}
    \caption{Primal residuals.}
    \label{fig:convergence_rpri}
  \end{subfigure}
  \hfill
  \begin{subfigure}{0.32\linewidth}
    \centering
    \includegraphics[width=\linewidth]{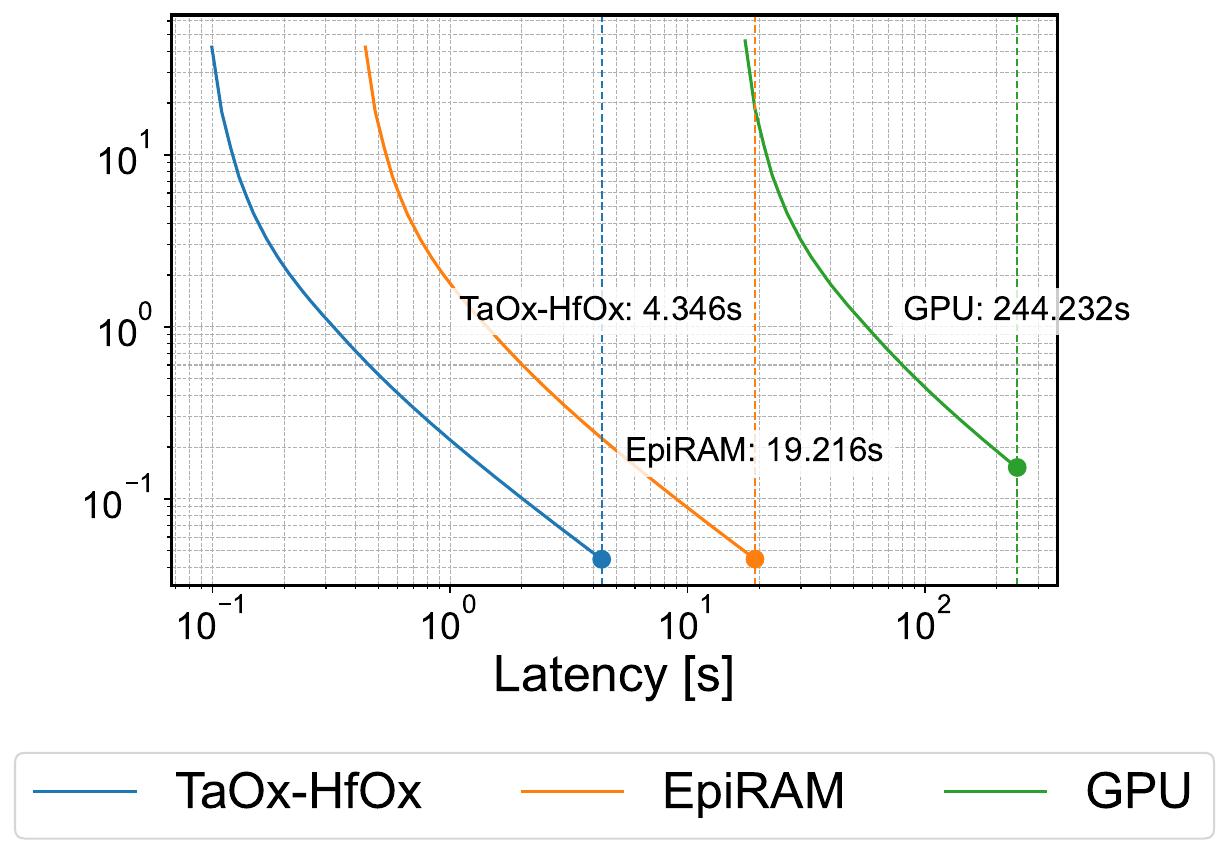}
    \caption{Dual residuals.}
    \label{fig:convergence_rdual}
  \end{subfigure}
  \hfill
  \begin{subfigure}{0.32\linewidth}
    \centering
    \includegraphics[width=\linewidth]{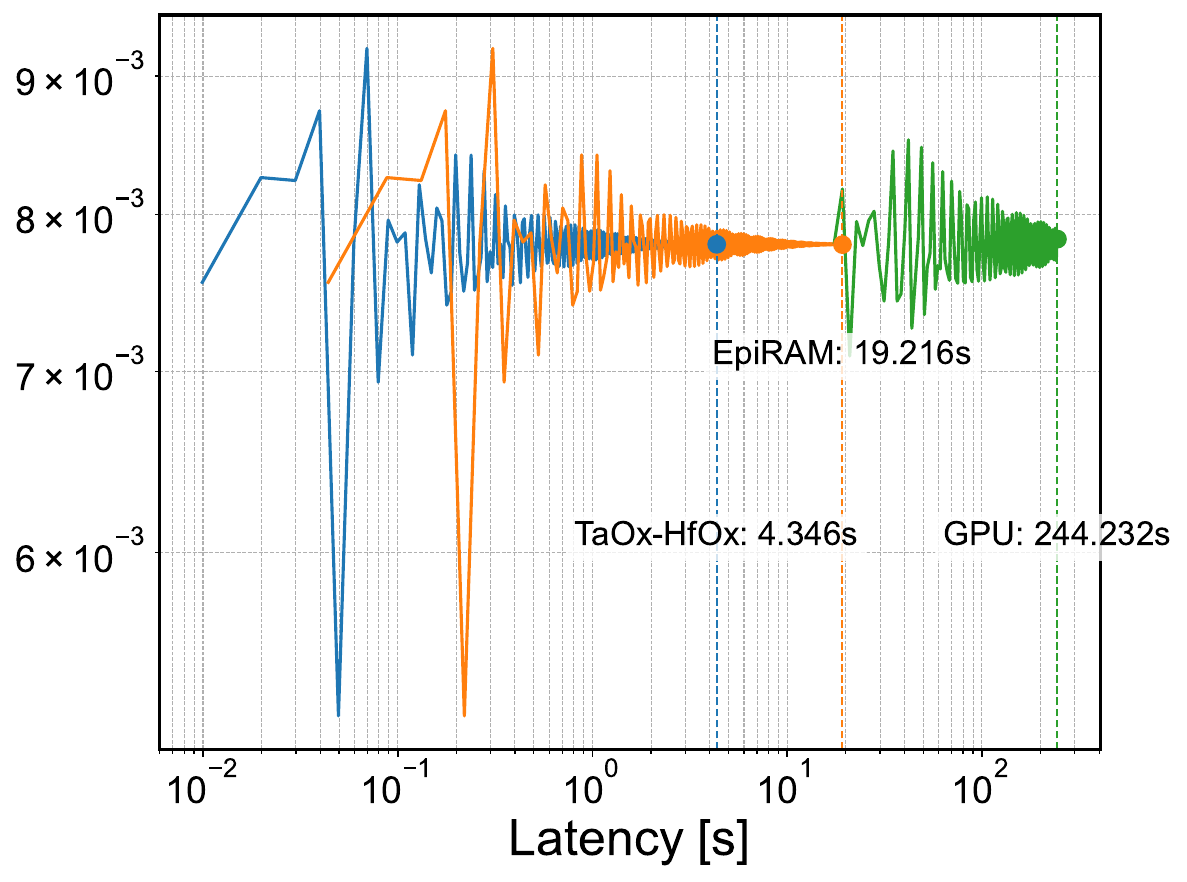}
    \caption{Optimality gap.}
    \label{fig:convergence_optgap}
  \end{subfigure}
  \caption{Convergence and accuracy on the gen-ip054 instance across two RRAM-based frameworks (EpiRAM and TaO$_x$/HfO$_x$) for the PDHG method.}
  \label{fig:convergence_latency}
\end{figure*}

\subsection{Convergence versus Latency Analysis}
We next study the progression of different model and performance parameters over time. We select gen-ip054 as the problem instance to conduct our convergence analysis of the PDHG method on the basis of latency complexities. Unlike other instances, it yields substantially higher PDHG optimality gaps on RRAM compared to GPU, indicating nontrivial convergence dynamics that make KKT residual trajectories informative.

Figure~\ref{fig:convergence_latency} provides a comparison on the convergence (in terms of primal-dual residuals and optimality gap) of two RRAM-based accelerators and GPU implementation over their respective latencies. We observe that, on gen-ip054, PDHG running on TaO$_x$-HfO$_x$ reaches the target stopping tolerance faster (approx.~4.35 s) than that of EpiRAM (approx.~19.22 s) under the same algorithmic settings and stopping criteria. This approx.~4.4$\times$ wall-clock advantage reflects lower per-iteration overheads and higher effective throughput for matrix-vector operations TaO$_x$-HfO$_x$.

Furthermore, the TaO$_x$-HfO$_x$ trace exhibits a steeper initial decrease in relative optimality gap 
followed by a consistent, nearly linear decline. In contrast, EpiRAM shows a longer transient/plateau before entering a similar steady-decay regime, suggesting additional fixed costs that amortize more slowly over the early iterations. Therefore, for workloads similar to gen-ip054—where a moderate optimality gap yields acceptable solution quality—TaO$_x$-HfO$_x$ provides a highly competitive time-to-solution for any target accuracy, while EpiRAM remains viable when energy per solve is prioritized and wall-clock is less critical.


\section{Discussion}
In this paper, we introduced and evaluated a novel in-memory computing framework designed to solve large-scale constrained optimization problems on RRAM-based hardware. Our approach marries a state-of-the-art first-order method, the PDHG algorithm, with a hardware-aware co-design that leverages the unique characteristics of analog RRAM devices.

A cornerstone of our framework is the deep algorithmic-hardware co-design—a principle that moves beyond simply implementing an algorithm in its conventional form on a new accelerator. Instead, we fundamentally restructure the algorithm to align with the physical realities of the hardware. This is especially critical for the architecture we are considering—a distributed 4-by-4 array of 64-by-64 RRAM crossbars, where a naive implementation could easily reintroduce much greater bottlenecks than those we aim to eliminate.

The challenge with a distributed system of crossbars is that while computation within each crossbar is fast, communication between them can become a performance limiter. Our approach directly addresses this through the proposed symmetric block-matrix formulation ($\mathbf{M}$). Here is how this represents a cornerstone of the design:
\begin{enumerate}[leftmargin=*]
    \item \textit{Single, Unified Mapping}: Instead of treating the forward ($\mathbf{K}\mathbf{x}$) and transpose ($\mathbf{K}^\top\mathbf{y}$) operations as separate tasks requiring different data layouts, we construct the larger $\mathbf{M}$ matrix directly on the host CPU. This unified matrix is then partitioned and mapped across the 16 physical crossbars, where it is encoded once using the write-and-verify scheme~\cite{vo2025harnessing}. This strategy is significant as it amortizes the high initial cost of writing to the RRAM devices over the entire iterative process of the Lanczos and PDHG algorithms.
    \item \textit{Elimination of Iterative Communication Overhead}: Once $\mathbf{M}$ is encoded across the distributed array, the system effectively acts as a single, large logical accelerator. During the PDHG iterations, the input vectors are broadcasted to all crossbars simultaneously. Each crossbar performs its local portion of the MVM in parallel. Resulting analog output currents are then aggregated to produce the final output vector. Crucially, no matrix reprogramming or data shuffling between crossbars is needed during the iterative solve, which significantly reduces energy use and latency.
    \item \textit{Abstraction and Simplicity:} Our \texttt{matmulAccel} function (Algorithm~\ref{alg:MVM-on-accelerators}) provides a clean abstraction layer. It adaptively handles the necessary padding and slicing of vectors to ensure the correct MVM ($\mathbf{K}\mathbf{x}$ or $\mathbf{K}^\top\mathbf{y}$) is performed on the distributed $\mathbf{M}$ matrix. This allows the main algorithms (Algorithm~\ref{alg:lanczos-svd} and~\ref{alg:pdhg-enhanced}) to remain straightforward, invoking what appears to be a simple MVM call, while the underlying distributed hardware execution is managed via a complex set of rules that can generalize to all matrix types.
\end{enumerate}
In contrast, a naive approach without this co-design might involve encoding $\mathbf{K}$ and $\mathbf{K}^\top$ on separate sets of crossbars, doubling the programming overhead and requiring logic to switch between hardware resources at each step. By creating a static data mapping that serves all computational needs of the algorithm, our framework minimizes on-chip communication and maximizes the parallel potential of the distributed RRAM architecture.

Furthermore, our framework successfully navigates the challenge of analog noise and device non-idealities. By employing the more robust Lanczos iteration for operator norm estimation and providing theoretical guarantees for the convergence of PDHG under noisy updates, we establish that RRAM-based IMC is a viable and reliable platform for high-precision iterative algorithms. The consistent convergence across all benchmark problems validates this central claim. These claims are further supported by the theoretical guarantees provided in this work.

\section{Conclusion}
In this paper, we establish the significant potential of RRAM-based in-memory computing for solving linear optimization problems through iterative PDHG. We demonstrate the applicability of in-memory PDHG through a full-stack framework that successfully solves canonical linear programs on a simulated RRAM architecture. The in-memory framework achieves optimality gaps comparable to those from commercial solvers and GPU-accelerated methods. Moreover, the proposed framework yields orders-of-magnitude improvements in performance and efficiency. For the iterative PDHG method, our results show energy and latency reductions of up to three orders of magnitude, respectively, compared to a GPU baseline. Our co-designed PDHG algorithm, featuring a symmetric block-matrix formulation and robust operator norm estimation, is fundamentally suited for the physical properties of RRAM devices. This approach minimizes expensive write operations and ensures convergence despite inherent analog hardware noise. Theoretical analysis provides robust convergence guarantees for the PDHG method in the presence of inexact updates, solidifying the mathematical foundation for using iterative methods on noisy analog computing platforms.

This research opens several promising avenues for future investigation. 
The current benchmarks, while substantial, are sized to fit within a single logical crossbar array. Future work will explore scalable algorithms for problems that require virtualization and partitioning across multiple RRAM arrays, building upon our previous work~\cite{vo2025harnessing}. Finally, this framework's core reliance on MVMs makes it adaptable to a broader class of optimization and machine learning/deep learning problems, which represent exciting directions for expanding the application of energy-efficient in-memory computing.



\section{Acknowledgements}
This research was supported in part by the National Science Foundation Secure and Trustworthy Cyberspace (SaTC) program under Award No. 2348411. Additionally, some of the computing for this work was performed at the High Performance Computing Center at Oklahoma State University, supported in part through the National Science Foundation Grant OAC-1531128.

\bibliographystyle{unsrt}
\bibliography{main}

\clearpage

\appendix
\section{Appendix}
\subsection{Detailed Results}\label{sec:detailed_results}
Tables 4 and 5 provide a detailed performance breakdown of the Lanczos and PDHG algorithms on three different hardware platforms: a GPU (\texttt{gpuPDLP}), and two simulated RRAM-based accelerators (EpiRAM and TaO$_x$–HfO$_x$). For each benchmark problem, the tables report the total energy consumed (in Joules) and the total latency (latency in seconds), further broken down into constituent operations including data transfer (Host-to-Device and Device-to-Host for the GPU), computation (Solve), and memory operations (Write/Read for the RRAM devices). 

Table 4 focuses on the initial operator norm estimation phase using the Lanczos iteration, showing the estimated dominant singular value, $\hat{\sigma}_{\max}(\mathbf{M})$, and the number of iterations to convergence ($k^\star$). 
Table 5 details the performance of the main PDHG algorithm, comparing the final objective values achieved by each platform against the optimal value obtained from the Gurobi solver on relaxed problems.

\begin{figure*}[!htbp]
\centering
\includegraphics[width=\linewidth]{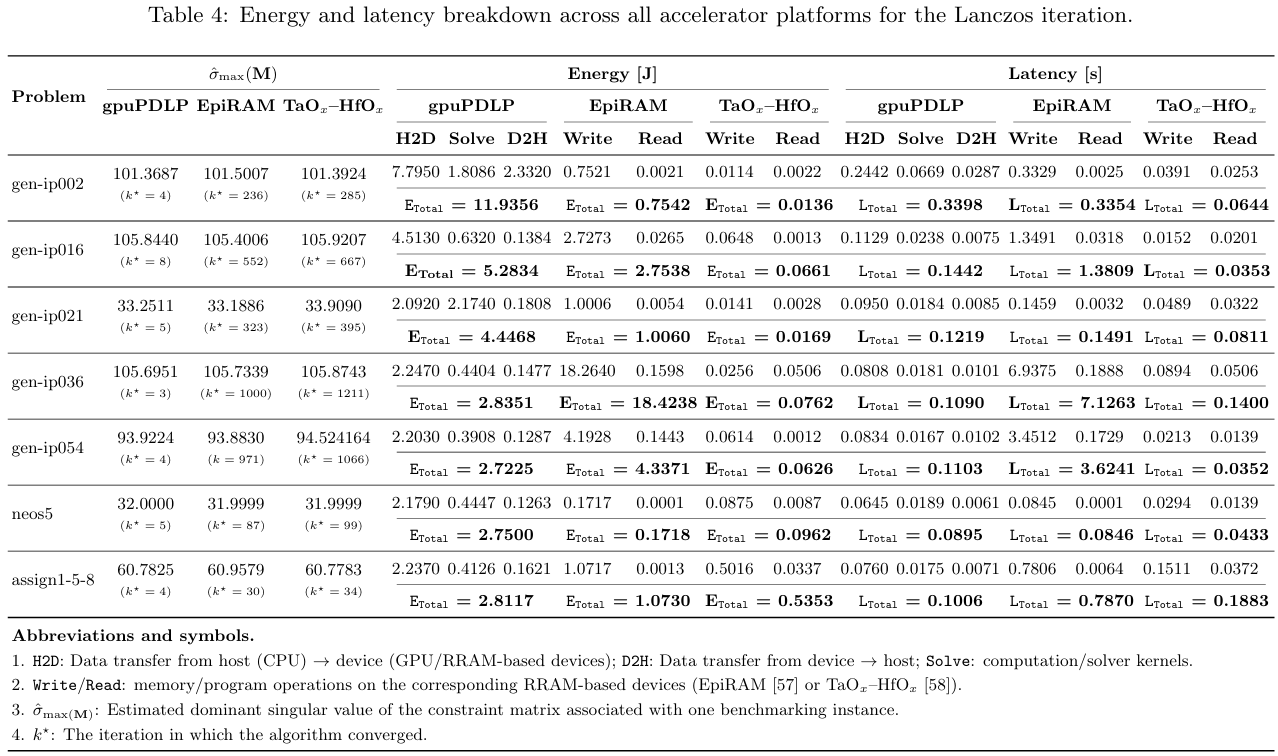}
\end{figure*}
  
\begin{figure*}[!htbp]
\centering
\includegraphics[width=\linewidth]{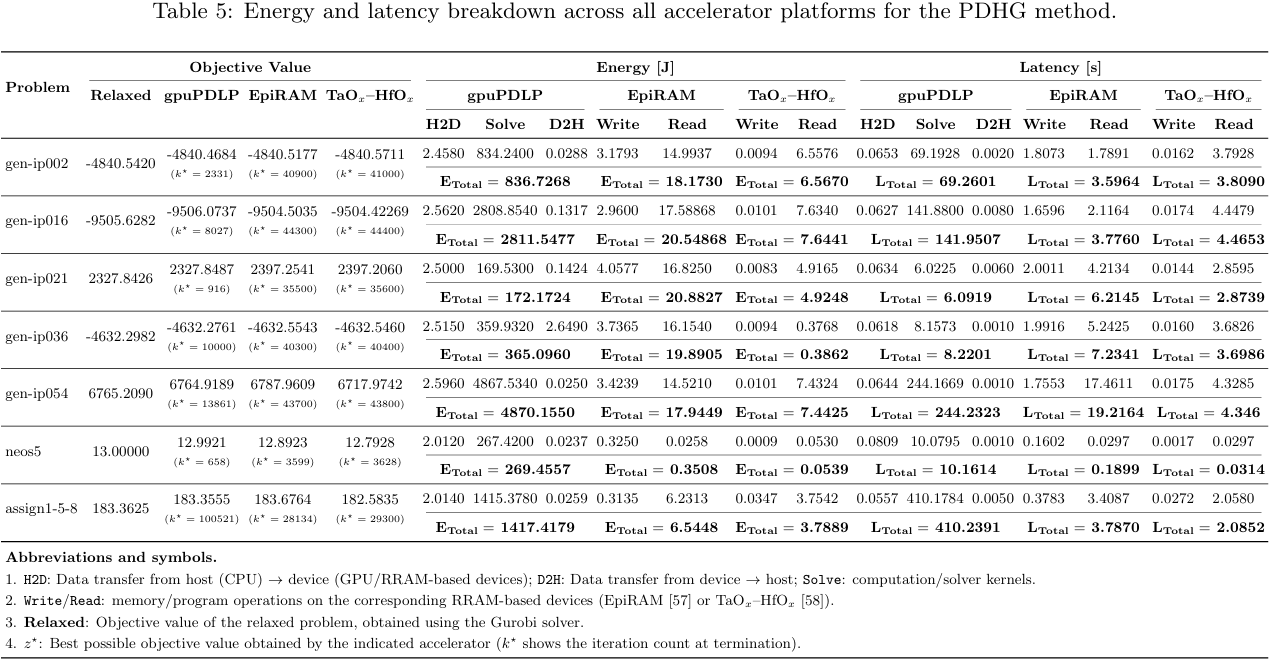}
\end{figure*}

\subsection{Theoretical Proofs}\label{appendix:supplemental_materials}
\subsubsection{Proof of Proposition~1}\label{subsubsec:proof_estimate_norm}
\begin{proof}
Let $\mathbf{K}$ be an arbitrary $m \times n$ matrix whose rank is $r$, i.e., $\mathbf{K} \in \mathbb{R}^{m \times n}$ with $\operatorname{rk}(\mathbf{K}) = \min(m,n) = r$.
Let $\mathbf{K} = \mathbf{U}\pmb{\Sigma}\mathbf{V}^\top$ and $\mathbf{K}^\top = \mathbf{V}\pmb{\Sigma}\mathbf{U}^\top$ be the singular value decomposition (SVD) of $\mathbf{K}$ and $\mathbf{K}^\top$, respectively.
Thus,
\begin{equation*}
    \begin{cases}
    \mathbf{U}\in\mathbb{R}^{m \times m} \\
    \pmb{\Sigma} = \operatorname{diag}(\sigma_1, \cdots, \sigma_{r}|\sigma_{r+1}, \cdots, \sigma_{n-1}, \sigma_{n}) \\
    \mathbf{V}^\top \in \mathbb{R}^{n \times n}
    \end{cases}
\end{equation*}

As a corollary to SVD, for all $i \leq r$:
\begin{equation*}
    \mathbf{K}\mathbf{v}_i = \sigma_i \mathbf{u}_i, \quad \mathbf{K}^\top\mathbf{u}_i = \sigma_i \mathbf{v}_i
\end{equation*}

We then define the following two vectors of length $(m + n)$, denoted by $\mathbf{w}^{(+)}_i$ and $\mathbf{w}^{(-)}_i$:
\begin{equation*}
    \mathbf{w}^{(+)}_i := \begin{bmatrix} \mathbf{u}_i \\ \mathbf{v}_i \end{bmatrix}, \quad \mathbf{w}^{(-)}_i := \begin{bmatrix} \mathbf{u}_i \\ -\mathbf{v}_i \end{bmatrix}, \quad \forall i \leq r
\end{equation*}

Hence,
\begin{equation*}
    \begin{cases}
        \mathbf{M}\mathbf{w}^{(+)}_i = \mathbf{M}\begin{bmatrix} \mathbf{u}_i \\
\mathbf{v}_i \end{bmatrix} = \begin{bmatrix} \mathbf{K}\mathbf{v}_i \\ \mathbf{K}^\top\mathbf{u}_i \end{bmatrix} =  \begin{bmatrix} \sigma_i\mathbf{u}_i \\ \sigma_i\mathbf{v}_i \end{bmatrix} = \sigma_i\mathbf{w}^{(+)}_i \\ \\
        \mathbf{M}\mathbf{w}^{(-)}_i = \mathbf{M}\begin{bmatrix} \mathbf{u}_i \\ -\mathbf{v}_i \end{bmatrix} = \begin{bmatrix} -\mathbf{K}\mathbf{v}_i \\ \mathbf{K}^\top\mathbf{u}_i \end{bmatrix} =  \begin{bmatrix} -\sigma_i\mathbf{u}_i \\ -\sigma_i (-\mathbf{v}_i) \end{bmatrix}= -\sigma_i\mathbf{w}_i^{(-)} \\
    \end{cases}
\end{equation*}

where $\mathbf{M}:= \begin{bmatrix} \mathbf{0}_{m \times m} & \mathbf{K} \\ \mathbf{K}^\top &\mathbf{0}_{n \times n} \end{bmatrix}$.
Thus, $\pm \sigma_i$ are eigenvalues of $\mathbf{M}$. As a result, $\lambda_{\max}(\mathbf{M}) = \sigma_{\max}(\mathbf{K})$.
\end{proof}

\renewcommand\qedsymbol{$\blacksquare$}

\subsubsection{Proof of Lemma \ref{lem:app-lanczos-perturbed}}\label{subsec:proof-app-lanczos-perturbed}
\begin{proof}
Consider the standard Lanczos recurrence with exact arithmetic,
\[
\mathbf{M}\mathbf{Q}_k = \mathbf{Q}_k \mathbf{T}_k + \mathbf{r}_k \mathbf{e}_k^\top,
\]
where $\mathbf{Q}_k = [\mathbf{q}_1, \ldots, \mathbf{q}_k]$ has orthonormal columns and $\mathbf{T}_k$ is the tridiagonal matrix formed from the recurrence coefficients $\{\alpha_j,\beta_j\}$. 

Under hardware noise, the $j$-th matrix--vector multiplication is perturbed as
\[
\mathbf{w}_j = \mathbf{M}\mathbf{q}_j + \pmb{\zeta}_j, 
\qquad \|\pmb{\zeta}_j\| \le \epsilon_{\max}.
\]
Let $\mathbf{L}_k=\mathbf{M}\mathbf{Q}_k$ represent the noise free version of LHS and $\mathbf{\tilde{L}}_k$ represent the same quantity under hardware noise. As a result, $\mathbf{\tilde{L}}_k = \mathbf{L}_k + \mathbf{Z}_k$, where $\mathbf{Z}_k = [\mathbf{\zeta}_1,\ldots \mathbf{\zeta}_k]$.

The tridiagonal coefficients are computed from $\mathbf{w}_j$ and thus yield a perturbed matrix $\mathbf{\tilde{T}}_k = \mathbf{T}_k + \Delta \mathbf{T}_k$. Therefore, the following relationship holds,
\[
\mathbf{\tilde{L}}_k = \mathbf{Q}_k\mathbf{\tilde{T}}_k +\mathbf{r}_k\mathbf{e}_k^\top
\]
The relationship between $\mathbf{L}_k$ and $\mathbf{T}_k$ under noise can be further refined to state,
\[
\mathbf{L}_k=\mathbf{M}\mathbf{Q}_k \;=\; \mathbf{Q}_k\,\mathbf{\tilde{T}}_k \;+\; \mathbf{r}_k \mathbf{e}_k^\top \;+\; \mathbf{E}_k,
\]
where $\mathbf{E}_k = -\mathbf{Z}_k$ collects the accumulated additive errors and any perturbation to the residual $\mathbf{r}_k$ is inductively absorbed into $\mathbf{E}_k$.
Applying the subadditivity of the spectral norm gives
\[
\|\mathbf{E}_k\| 
\;\le\; \sum_{j=1}^k \|\pmb{\zeta}_j\|
\;\le\; k\,\epsilon_{\max}.
\]
Hence, the perturbation in the Lanczos relation grows at most linearly with the number of iterations. 
\end{proof}


\subsubsection{Proof of Theorem \ref{thm:app-lanczos-ritz}}\label{subsec:proof-app-lanczos-ritz}

\begin{proof}
We first analyze the expected bias in the context of the exact Lanczos process. We can express $\mathbb{E}_{\mathbf{q}_1,\pmb{\zeta}}\Big[|\theta_k - L|\Big]$ as  follows:
\[
\mathbb{E}_{\mathbf{q}_1,\pmb{\zeta}}\Big[|\theta_k - L|\Big] = \mathbb{E}_{\mathbf{q}_1,\pmb{\zeta}}\Big[|\lambda_{\max}(\mathbf{\tilde{T}}_k) -\lambda_{\max}(\mathbf{T}_k) + \lambda_{\max}(\mathbf{T}_k) - L|\Big]
\]
As a consequence, we can decompose the expectation term based on the random variables $\mathbf{q}_1$ and $\mathbf{\zeta}$ as follows: 
\begin{equation*}
\begin{aligned}
\mathbb{E}_{\mathbf{q}_1,\pmb{\zeta}}\Big[|\theta_k - L|\Big] &\leq \underbrace{\mathbb{E}_{\mathbf{q}_1}\Big[|\lambda_{\max}(\mathbf{T}_k) - L|\Big]}_{\text{no perturbation (bias)}} \\
&\qquad+ \underbrace{\mathbb{E}_{\pmb{\zeta}}\Big[|\lambda_{\max}(\mathbf{\tilde{T}}_k)
 - \lambda_{\max}(\mathbf{T}_k)|\Big]}_{\text{with hardware perturbations}}
 \end{aligned}
\end{equation*}

\vspace{0.5em}
\noindent\textbf{Unperturbed case}. For the exact tridiagonal matrix $\mathbf{T}_k$ obtained from $\mathbf{M}$ and
a random starting vector $\mathbf{q}_1$, the expected Ritz value bias satisfies
\[
\mathbb{E}_{\mathbf{q}_1}\Big[|\lambda_{\max}(\mathbf{T}_k) - L|\Big]
\ \le\ C\rho^{(k-1)},
\]
where $\rho = \lambda_2 / \lambda_1 < 1$ and $C$ is a constant on the distribution of $\mathbf{q}_1$ (Theorem 3.1 from \cite{kuczynski1992estimating}).

This result captures the expected geometric convergence of the largest Ritz
value under random initialization.

\vspace{0.5em}
\noindent\textbf{Hardware perturbations case.}
By applying Weyl's inequality on noisy matrix-vector products on the computed tridiagonal matrix $\mathbf{\tilde{T}}_k$, we obtain
\[
|\lambda_{\max}(\mathbf{\tilde{T}}_k) - \lambda_{\max}(\mathbf{T}_k)|
\ \le\ \|\mathbf{\tilde{T}}_k - \mathbf{T}_k\|
\]
Further using Lemma \ref{lem:app-lanczos-perturbed} along with norm invariance under orthonormality, we can say that 
\[
\|\mathbf{\tilde{L}}_k - \mathbf{L}_k\| = \|\mathbf{Q_k(\tilde{T}}_k - \mathbf{T}_k)\| = \|\mathbf{\tilde{T}}_k - \mathbf{T}_k\| = \|\mathbf{E}_k\| \leq k\,\epsilon_{\max}
\]

Applying the triangle inequality and taking expectations over both the random start and the noise,
\begin{equation*}
\begin{aligned}
\mathbb{E}_{\mathbf{q}_1,\pmb{\zeta}}\Big[|\theta_k - L|\Big]
&\leq
\mathbb{E}_{\mathbf{q}_1}\Big[|\lambda_{\max}(\mathbf{T}_k) - L|\Big] \\
&\qquad + \mathbb{E}_{\pmb{\zeta}}\Big[|\lambda_{\max}(\mathbf{\tilde{T}}_k)
 - \lambda_{\max}(\mathbf{T}_k)|\Big] \\
&\leq C\rho^{(k-1)} + k\,\epsilon_{\max}.
\end{aligned}
\end{equation*}

\vspace{0.5em}
\noindent\textbf{Averaged (ergodic) estimate.}
Consider the mean absolute deviation of the averaged estimator:
\[
\mathbb{E}_{\mathbf{q}_1,\pmb{\zeta}}|\bar\theta_K - L|
= \mathbb{E}_{\mathbf{q}_1,\pmb{\zeta}}\left|\frac{1}{K}\sum_{k=1}^K (\theta_k - L)\right|
\le \frac{1}{K}\sum_{k=1}^K \mathbb{E}_{\mathbf{q}_1,\pmb{\zeta}}|\theta_k - L|,
\]
where the inequality follows from Jensen’s inequality.
Substituting the single-step bound,
\[
\begin{aligned}
\mathbb{E}_{\mathbf{q}_1,\pmb{\zeta}}|\bar\theta_K - L|
\le &
\frac{1}{K}\sum_{k=1}^K \Big( C\rho^{(k-1)} + k\,\epsilon_{\max} \Big) \\
=&
\underbrace{\frac{C}{K}\sum_{k=1}^K \rho^{(k-1)}}_{(1)}\;
+\;
\underbrace{\frac{\epsilon_{\max}}{K}\sum_{k=1}^K k}_{(2)}.
\end{aligned}
\]

Term~(1) is a finite geometric sum bounded by a constant multiple of $1/K$,
since $\sum_{k=1}^K \rho^{(k-1)} \le 1/(1-\rho)$.  
Term~(2) simplifies to $\mathcal{O}(K\epsilon_{\max})$ for a looser bound after dividing by $K$ since $\sum_{k=1}^K = K(K+1)/2$. Therefore, combining these two bounds yields
\begin{equation*}
\mathbb{E}_{\mathbf{q}_1,\pmb{\zeta}}|\bar\theta_K - L|
\le \mathcal{O}\!\left(\frac{1}{K}\right)
   + \mathcal{O}(K\epsilon_{\max}),
\end{equation*}
which completes the proof.
\end{proof}

\subsubsection{Proof of Lemma \ref{lem:app-stepsize}}\label{subsec:proof-app-stepsize}
\begin{proof}
We start with the assumption about the relative error of our estimate:
    $$
    |\hat{L}-L|\le \bar{\delta} L
    $$
By rewriting as a two-sided inequality and then adding $L$ to all parts, we obtain a bound on where the true $L$ is relative to our estimate $\hat{L}$:
    $$
    \begin{aligned}
        &\qquad \text{-}\bar{\delta} L \le \hat{L}-L \le \bar{\delta} L \\
        &\therefore L - \bar{\delta} L \le \hat{L} \le L +\bar{\delta} L \\
        &\therefore L(1 - \bar{\delta}) \le \hat{L} \leq L(1 + \bar{\delta}) 
    \end{aligned}
    $$
The convergence of the PDHG depends on setting the coupled primal-dual step sizes, $\tau\sigma$, to conform with $1/\hat{L}^2$. We can get a bound for this term from the left side of the inequality as follows:

\begin{enumerate}
    \item Since all terms are positive, we can take the reciprocal of both sides, which flips the inequality sign:
        $$
        \frac{1}{\hat{L}} \le \frac{1}{L(1-\bar\delta)}
        $$
    \item Squaring both sides preserves the inequality:
        $$\frac{1}{\hat{L}^2} \le \frac{1}{L^2(1-\bar\delta)^2}
        $$
    \item By substituting the rule for $\tau\sigma$—that is, $\tau\sigma L^2 = (\frac{\vartheta}{\hat{L}^2})L^2$, and employing the inequality we obtained from the previous step, we now have:
        $$
        \tau\sigma L^2 \leq \Big(\frac{1}{L^2(1-\bar{\delta})^2}\Big)\vartheta L^2 \implies \tau\sigma L^2 \leq \frac{\vartheta}{(1-\bar{\delta})^2}
        $$
\end{enumerate}
For the right side of the inequality, given the safety margin is $\vartheta < (1 - \bar{\delta})^2$, then:
    $$
     \tau\sigma L^2 \leq \frac{\vartheta}{(1-\bar{\delta})^2} \leq 1
    $$
\end{proof}


\subsubsection{Proof of Lemma \ref{lem:one-step-pdhg}}\label{subsec:proof-one-step-pdhg}
\begin{proof}
For the LP saddle-point problem, $f(\mathbf{x}) = \mathbf{c}^\top\mathbf{x}$ and $g(\mathbf{y}) = -\mathbf{b}^\top\mathbf{y}$ over the feasible sets, we have the following results. Using the results of the proximal projection, we have:
    \begin{enumerate}
        \item For any feasible $\mathbf{x}$, the property of the proximal operator gives us:
            $$
            \langle \mathbf{x}^{\{k+1\}} - \Big[\mathbf{x}^{\{k\}} - \tau(\mathbf{c} + \mathbf{K}^\top \mathbf{y}^{\{k\}} + \pmb{\xi}_k)\Big], \mathbf{x} - \mathbf{x}^{\{k+1\}} \rangle \ge 0
            $$
        Rearranging the terms, we get:
            $$
            \begin{aligned}
            &\tau \langle \mathbf{c} + \mathbf{K}^\top \mathbf{y}^{\{k\}} + \pmb{\xi}_k, \mathbf{x}^{\{k+1\}} - \mathbf{x} \rangle \\
            &\quad \le -\langle \mathbf{x}^{\{k\}} - \mathbf{x}^{\{k+1\}}, \mathbf{x} - \mathbf{x}^{\{k+1\}} \rangle
            \end{aligned}
            $$
        Now, we use the identity
\(2\langle \mathbf{a}-\mathbf{b}, \mathbf{b}-\mathbf{c} \rangle
= \|\mathbf{a}-\mathbf{c}\|^2 - \|\mathbf{a}-\mathbf{b}\|^2 - \|\mathbf{b}-\mathbf{c}\|^2\). Let \(\mathbf{a} = \mathbf{x}^{\{k\}}, \mathbf{b} = \mathbf{x}^{\{k+1\}}, \mathbf{c} = \mathbf{x}\),
we obtain:
            $$
            \begin{aligned}
            2\langle \mathbf{x}^{\{k\}} - \mathbf{x}^{\{k+1\}},\mathbf{x}^{\{k+1\}}-\mathbf{x}\rangle = &\|\mathbf{x} - \mathbf{x}^{\{k\}}\|^2 \\
              &- \|\mathbf{x} - \mathbf{x}^{\{k+1\}}\|^2 \\
              & - \|\mathbf{x}^{\{k\}} - \mathbf{x}^{\{k+1\}}\|^2
            \end{aligned}
            $$
        Substituting this into our inequality and dividing by $\tau$, we have:
            $$
            \begin{aligned}
                &\langle \mathbf{c} + \mathbf{K}^\top \mathbf{y}^{\{k\}}, \mathbf{x}^{\{k+1\}} - \mathbf{x} \rangle \le \\ & \qquad \frac{1}{2\tau}\Big(\|\mathbf{x}-\mathbf{x}^{\{k\}}\|^2 -  \|\mathbf{x}-\mathbf{x}^{\{k+1\}}\|^2 - \|\mathbf{x}^{\{k\}} - \mathbf{x}^{\{k+1\}}\|^2\Big)\\ & \qquad -  \langle \pmb{\xi}_k, \mathbf{x}^{\{k+1\}} - \mathbf{x} \rangle
            \end{aligned}
            $$

    \item Similarly, for any feasible dual $\mathbf{y}$ given $\sigma$ the corresponding optimality condition results in the following relation:
            $$
            \begin{aligned} 
            &\langle -\mathbf{b} + \mathbf{K}(2\mathbf{x}^{\{k+1\}}-\mathbf{x}^{\{k\}}), \mathbf{y}^{\{k+1\}} - \mathbf{y} \rangle \le \\
            &\qquad \frac{1}{2\sigma}\Big[\|\mathbf{y}-\mathbf{y}^{\{k\}}\|^2-\|\mathbf{y}-\mathbf{y}^{\{k+1\}}\|^2 - \|\mathbf{y}^{\{k\}} - \mathbf{y}^{\{k+1\}}\|^2 \Big]\\
            &\qquad - \langle \pmb{\zeta}_k, \mathbf{y}^{\{k+1\}} - \mathbf{y} \rangle
            \end{aligned}
            $$
    We specifically note that $\mathbf{b}$ denotes the fixed right-hand side vector in the constraint $\mathbf{Kx} = \mathbf{b}$ of the primal problem.
\end{enumerate}
Adding the two inequalities gives:
\[
\begin{aligned}
&\langle \mathbf{c} - \mathbf{K}^\top \mathbf{y}^{\{k+1\}}, \mathbf{x}^{\{k+1\}} - \mathbf{x}\rangle\\
& \qquad + \langle \mathbf{K}\mathbf{x}^{\{k+1\}} - \mathbf{b}, \mathbf{y}^{\{k+1\}} - \mathbf{y}\rangle \\
&\le
\frac{1}{2\tau}\Big[\|\mathbf{x}-\mathbf{x}^{\{k\}}\|^2 - \|\mathbf{x}-\mathbf{x}^{\{k+1\}}\|^2 - \|\mathbf{x}^{\{k+1\}}-\mathbf{x}^{\{k\}}\|^2\Big]\\
&\quad + \frac{1}{2\sigma}\Big[\|\mathbf{y}-\mathbf{y}^{\{k\}}\|^2 - \|\mathbf{y}-\mathbf{y}^{\{k+1\}}\|^2 - \|\mathbf{y}^{\{k+1\}}-\mathbf{y}^{\{k\}}\|^2\Big]\\
&\quad - \langle \mathbf{K}(\mathbf{x}^{\{k+1\}}-\mathbf{x}^{\{k\}}), \mathbf{y}^{\{k+1\}}-\mathbf{y}\rangle\\
&\quad + \langle \pmb{\xi}_k, \mathbf{x}-\mathbf{x}^{\{k+1\}}\rangle + \langle \pmb{\zeta}_k, \mathbf{y}-\mathbf{y}^{\{k+1\}}\rangle
\end{aligned}
\]
which is exactly the claimed inequality.  

\end{proof}


\subsubsection{Proof of Theorem \ref{thm:ergodic-pdhg}}\label{subsec:proof-ergodic-pdhg}

\begin{proof}
We will proceed by summing the one-step PDHG inequality from Lemma~\ref{lem:one-step-pdhg} over $k=0, \dots, K-1$ and then bounding the resulting terms. We begin with the one-step inequality derived from the properties of the proximal updates. For any feasible solution $(\mathbf{x}^*, \mathbf{y}^*)$, the iterates satisfy:
        $$
        \begin{aligned}
            &\langle \mathbf{c}-\mathbf{K}^\top \mathbf{y}^{\{k+1\}},\mathbf{x}^{\{k+1\}}-\mathbf{x}^*\rangle+\langle \mathbf{K}\mathbf{x}^{\{k+1\}}-\mathbf{b},\mathbf{y}^{\{k+1\}}-\mathbf{y}^*\rangle \\
            &\le \frac{1}{2\tau}\left[\|\mathbf{x}^*-\mathbf{x}^{\{k\}}\|^2-\|\mathbf{x}^*-\mathbf{x}^{\{k+1\}}\|^2 - \|\mathbf{x}^{\{k\}} - \mathbf{x}^{\{k+1\}}\|^2\right] \\
            &+ \frac{1}{2\sigma}\left[\|\mathbf{y}^*-\mathbf{y}^{\{k\}}\|^2-\|\mathbf{y}^*-\mathbf{y}^{\{k+1\}}\|^2 - \|\mathbf{y}^{\{k\}} - \mathbf{y}^{\{k+1\}}\|^2\right] \\
            &- \langle \mathbf{K}(\mathbf{x}^{\{k+1\}} - \mathbf{x}^{\{k\}}),\mathbf{y}^{\{k+1\}}-\mathbf{y}^*\rangle \\
            &+ \langle \pmb{\xi}_k,\mathbf{x}^*-\mathbf{x}^{\{k+1\}}\rangle + \langle \pmb{\zeta}_k,\mathbf{y}^*-\mathbf{y}^{\{k+1\}}\rangle.
        \end{aligned}
        $$
We then proceed towards the treatment of each term on the RHS of the inequality in Lemma \ref{lem:one-step-pdhg}.
\begin{enumerate}
    \item We sum the one-step inequality from $k=0$ to $K-1$. The LHS becomes the sum of the instantaneous gaps. The RHS is a sum of distance-related terms and noise terms, involving squared distances to the optimal solution forming a \textbf{telescoping sum}.
            \begin{itemize}
                \item \textbf{Primal Distances:}
                        \begin{equation*}
                        \begin{aligned}
                            &\sum_{k=0}^{K-1} \frac{1}{2\tau}\left[\|\mathbf{x}^*-\mathbf{x}^{\{k\}}\|^2-\|\mathbf{x}^*-\mathbf{x}^{\{k+1\}}\|^2\right] \\
                            &\qquad = \frac{1}{2\tau}\left[\|\mathbf{x}^*-\mathbf{x}^{\{0\}}\|^2 - \|\mathbf{x}^*-\mathbf{x}^{\{K\}}\|^2\right] \\
                            &\qquad \le \frac{1}{2\tau}\|\mathbf{x}^*-\mathbf{x}^{\{0\}}\|^2
                            \end{aligned} 
                        \end{equation*}
                \item \textbf{Dual Distances:} Similarly for the dual variable:
                        \begin{equation*}
                            \begin{aligned}
                            &\sum_{k=0}^{K-1} \frac{1}{2\sigma}\left[\|\mathbf{y}^*-\mathbf{y}^{\{k\}}\|^2-\|\mathbf{y}^*-\mathbf{y}^{\{k+1\}}\|^2\right] \\
                            &\qquad \le \frac{1}{2\sigma}\|\mathbf{y}^*-\mathbf{y}^{\{0\}}\|^2
                            \end{aligned}
                        \end{equation*}
            \end{itemize}
        The sum of all distance-related terms is therefore bounded by a constant $C_0$ that depends only on the initial point.
        The remaining distance terms including $\Delta\mathbf{x}^{\{k\}} = \mathbf{x}^{\{k+1\}} - \mathbf{x}^{\{k\}}$, $\Delta\mathbf{y}^{\{k\}} = \mathbf{y}^{\{k+1\}} - \mathbf{y}^{\{k\}}$ and the cross terms can be denoted by R, where 
         \begin{equation*}
         \begin{aligned}
           R = &-\frac{1}{2\tau}\sum\limits_{k=0}^{K-1}||\Delta\mathbf{x}^{\{k\}}||^2 - \frac{1}{2\sigma}\sum\limits_{k=0}^{K-1}||\Delta\mathbf{y}^{\{k\}}||^2 \\& \qquad-\sum\limits_{k=0}^{K-1}\langle\mathbf{K}(\Delta\mathbf{x}^{\{k\}}),\mathbf{y}^{\{k+1\}}-\mathbf{y}^*\rangle
        \end{aligned}
        \end{equation*}
        We can upper bound the cross term using the bounded value of the dual $||\mathbf{y}^{\{k+1\}}-\mathbf{y}^*||\leq \mathbf{R}_y$ as follows
                 \begin{equation*}
                 \begin{aligned}
                    &-\sum\limits_{k=0}^{K-1}\langle\mathbf{K}(\Delta\mathbf{x}^{\{k\}}),\mathbf{y}^{\{k+1\}}-\mathbf{y}^*\rangle \\ 
                    & \leq \Big|\sum\limits_{k=0}^{K-1}\langle\mathbf{K}(\Delta\mathbf{x}^{\{k\}}),\mathbf{y}^{\{k+1\}}-\mathbf{y}^*\rangle\Big| \leq \mathbf{L}\mathbf{R}_y\sum\limits_{k=0}^{K-1} ||\Delta\mathbf{x}^{\{k\}}||
            \end{aligned}
        \end{equation*}
        Using Young's inequality $ab\leq \theta a^2/2 + b^2/(2\theta)$ and choosing $\theta=\sigma L$, with $\theta > 0$, we obtain
        \begin{equation*}
                 \begin{aligned}
                    & \mathbf{L}\mathbf{R}_y\sum\limits_{k=0}^{K-1} ||\Delta\mathbf{x}^{\{k\}}|| \leq \tau\sigma L^2\sum\limits_{k=0}^{K-1} \frac{||\Delta\mathbf{x}^{\{k\}}||^2}{2\tau} + \frac{K\mathbf{R}^2_y}{2\sigma}
            \end{aligned}
        \end{equation*}
        Aggregating terms we can see that
                \begin{equation*}
                 \begin{aligned}
                    & R \leq \frac{\tau\sigma L^2-1}{2\tau}\sum\limits_{k=0}^{K-1}||\Delta\mathbf{x}^{\{k\}}||^2  - \frac{1}{2\sigma}\sum\limits_{k=0}^{K-1}||\Delta\mathbf{y}^{\{k\}}||^2 + \frac{K\mathbf{R}^2_y}{2\sigma}
            \end{aligned}
        \end{equation*}
        Since, $\tau\sigma L^2<1$ as established by Lemma \ref{lem:app-stepsize}, the first two terms become non-positive, leaving us with only the last term.
        Therefore, the cross terms add a positive component that creates a total bound including the bounds of distance denoted by $C_K = C_ 0 + \frac{K\mathbf{R}^2_y}{2\sigma}$.
    \item Let $N_K$ be the sum of the noise terms over $K$ iterations:
        \begin{equation*}
        N_K = \sum_{k=0}^{K-1} \Big[ \langle \pmb{\xi}_k,\,\mathbf{x}^*-\mathbf{x}^{\{k+1\}}\rangle + \langle \pmb{\zeta}_k,\,\mathbf{y}^*-\mathbf{y}^{\{k+1\}}\rangle \Big]
        \end{equation*}
        By the law of total expectation and both the unbiasedness and independence assumptions, $\mathbb{E}[N_K] = 0$. To bound the convergence rate, we analyze the expectation of its magnitude, $\mathbb{E}[|N_K|]$. 
        
        Using Jensen's inequality, $\mathbb{E}[|N_K|] \le \sqrt{\mathbb{E}[N_K^2]}$. Since $\mathbb{E}[N_K] = 0$, the standard deviation is $\sqrt{\mathbb{V}(N_K)}$. Due to the independence of noise across iterations, the variance of the sum is the sum of the variances:
            \begin{equation*}
            \begin{aligned}
            \mathbb{V}(N_K) &= \\
            &\sum_{k=0}^{K-1} \mathbb{V}\Big[\langle \pmb{\xi}_k, \mathbf{x}^*-\mathbf{x}^{\{k+1\}}\rangle + \langle \pmb{\zeta}_k, \mathbf{y}^*-\mathbf{y}^{\{k+1\}}\rangle\Big]
            \end{aligned}
            \end{equation*}
        We bound the variance of a single term at iteration $k$. Using Cauchy-Schwarz we have:
            \begin{equation*}
                \begin{aligned}
                        \mathbb{V}(\langle \pmb{\xi}_k, \mathbf{x}^*-\mathbf{x}^{\{k+1\}}\rangle) &= \mathbb{E}[\langle \pmb{\xi}_k, \mathbf{x}^*-\mathbf{x}^{\{k+1\}}\rangle^2] \\
                        &\qquad \le \mathbb{E}[\|\pmb{\xi}_k\|^2 \|\mathbf{x}^*-\mathbf{x}^{\{k+1\}}\|^2]
                \end{aligned}
            \end{equation*}
            With $\mathbb{E}[\|\pmb{\xi}_k\|^2] \le \varphi_\xi^2 \le \delta^2$, and bounded primal $\|\mathbf{x}^*-\mathbf{x}^{\{k+1\}}\|^2 \leq \mathbf{R}^2_x$ this variance is bounded by $(\delta\mathbf{R}_x)^2$. We repeat the same process for $\mathbb{E}[\|\pmb{\zeta}_k\|^2] \le \varphi_\zeta^2 \le \delta^2$ and sum the variances over $K$ iterations, absorbing $\mathbf{R}_x,\mathbf{R}_y$ into the constant $\delta$:
            \begin{equation*}
            \mathbb{V}(N_K) \le \sum_{k=0}^{K-1} 2\delta^2 = 2K\delta^2
            \end{equation*}
            Therefore, the expected magnitude of the noise term is bounded by $\mathbb{E}[|N_K|] \le \sqrt{2 K \delta^2} = \delta\sqrt{2K}$
    
    \item Let $\mathbf{z}^{\{k+1\}} = (\mathbf{x}^{\{k+1\}}, \mathbf{y}^{\{k+1\}})$. We define the restricted gap function evaluated at $\mathbf{z}^*$ as 
        $$
        \begin{aligned}            
        \text{Gap}(\mathbf{z}^{\{k+1\}}) &= \langle \mathbf{c}-\mathbf{K}^\top \mathbf{y}^{\{k+1\}},\mathbf{x}^{\{k+1\}}-\mathbf{x}^*\rangle \\
            &\quad + \langle \mathbf{K}(\mathbf{x}^{\{k+1\}}-\mathbf{b}),\mathbf{y}^{\{k+1\}}-\mathbf{y}^*\rangle,
        \end{aligned}
        $$ 
        which corresponds exactly to the LHS of the inequality in Lemma~\ref{lem:one-step-pdhg}. Since the gap function is convex, applying Jensen's inequality gives us:
        \begin{equation*}
            \begin{aligned}
                &\quad\text{gap}(\mathbf{\bar{z}}_K)=\text{gap}\Big(\frac{1}{K}\sum_{k=0}^{K-1}\mathbf{z}^{\{k\}}\Big) \leq \frac{1}{K}\sum_{k=0}^{K-1}\text{gap}\Big(\mathbf{z}^{\{k\}}\Big) \\
           \end{aligned}
        \end{equation*}
    
    Thus,    
        \begin{equation*}
            K \cdot \mathbb{E}[\operatorname{gap}(\bar{\mathbf{z}}_K)] \le C_0 + \frac{K\mathbf{R}^2_y}{2\sigma} + \delta\sqrt{2K} 
        \end{equation*}
    \item Dividing by $K$ yields the final convergence rate:
        \begin{equation*}
            \mathbb{E}[\operatorname{gap}(\bar{\mathbf{z}}_K)] \le \frac{C_0}{K}  + \frac{\mathbf{R}^2_y}{2\sigma} + \frac{\sqrt{2}\delta}{\sqrt{K}} \equiv \mathcal{O}\left(\frac{1}{K}\right) + \mathcal{O}(1) + \mathcal{O}\left(\frac{1}{\sqrt{K}}\right)
        \end{equation*}
\end{enumerate}
\end{proof}


  

\end{document}